\not \isundefined{\Section}}        
\not \isundefined{\ead}}        
\not \isundefined{\disputationsdatum} 
\not \isundefined{\disputationslokal}}   
\or \boolean{maybeSIAM}}
  \or \boolean{maybeFOCS}
  \or \boolean{maybeElsevier}
 \or \boolean{maybePoster}
  \or \boolean{maybeSIAM}
  \or \boolean{maybeThesis}}
\DeclareMathAlphabet{\mathsfsl}{OT1}{cmss}{m}{sl}
\newcommand{\eqperiod}{\enspace .}
\newcommand{\formatfunctiontoset}[1]{\mathit{#1}}
\newcommand{\introduceterm}[1]{{\emph{#1}}}
\newcommand{\ie}{i.e.,\ }
\newcommand{\wolog}{without loss of generality\xspace}
\newcommand{\Bigoh}[1]{\mathrm{O} \bigl( #1 \bigr)}
\newcommand{\bigoh}[1]{\mathrm{O} ( #1 )}
\newcommand{\Bigomega}[1]{\Omega \bigl( #1 \bigr)}
\newcommand{\bigomega}[1]{\Omega ( #1 )}
\newcommand{\problemlanguageformat}[1]{\textsc{#1}\xspace}
\newcommand{\MINIMALUNSATISFIABILITY}%
  {\problemlanguageformat{minimal unsatisfiability}}
\newcommand{\complclassformat}[1]{\textsf{#1}\xspace}
\newcommand{\cocomplclass}[1]%
        {\mbox{\complclassformat{co}-\complclassformat{#1}}\xspace}
\newcommand{\refsec}[1]{Section~\ref{#1}}
\newcommand{\refth}[1]{Theorem~\ref{#1}}
\newcommand{\reflem}[1]{Lemma~\ref{#1}}
\newcommand{\refdef}[1]{Definition~\ref{#1}}
\newcommand{\Refth}[1]{Theorem~\ref{#1}}
\newcommand{\refeq}[1]{\eqref{#1}}}
\renewcommand{\refeq}[1]{\eqref{#1}}}
\DeclareMathOperator{\Expop}{E}
\newcommand{\twincommandJN}[6]%
    {#1#2#3\vphantom{#2#5}\mspace{-2.25mu}#4.#5#6}
\newcommand{\CondExp}[2]%
    {\Expop\twincommandJN{\bigl[}{#1}{\bigl|}{\bigr}{\,#2}{\bigr]}}
\newcommand{\CONDEXP}[2]%
     {\Expop\twincommandJN{\left[}{#1}{\left|}{\right}{\,#2}{\right]}}
\newcommand{\CondProb}[3][]%
    {\Pr_{#1}\twincommandJN{\bigl[}{#2}{\bigl|}{\bigr}{\,#3}{\bigr]}}
\newcommand{\CONDPROB}[3][]%
    {\Pr_{#1}\twincommandJN{\left[}{#2}{\left|}{\right}{\,#3}{\right]}}
\newcommand{\isdistras}[2]{\ensuremath{#1} \sim \ensuremath{#2}}
\newcommand{\setcompact}[1]{{\ensuremath{\bigl\{ #1 \bigr\}}}}
\newcommand{\setdescrcompact}[3][\mid]{{\setcompact{ #2 #1 #3 }}}
\newcommand{\Setdescr}[3][|]%
     {\twincommandJN{\bigl\{}{#2}{\bigl#1}{\bigr}{\,#3}{\bigr\}}}
\newcommand{\SETDESCR}[3][|]%
     {\twincommandJN{\left\{}{#2}{\left#1}{\right}{\,#3}{\right\}}}
\newcommand{\Setdescrbrackets}[3][|]%
     {\twincommandJN{\bigl[}{#2}{\bigl#1}{\bigr}{\,#3}{\bigr]}}
\newcommand{\SETDESCRBRACKETS}[3][|]%
     {\twincommandJN{\left[}{#2}{\left#1}{\right}{\,#3}{\right]}}
\newcommand{\setsize}[1]{\lvert#1\rvert}
\newcommand{\union}{\cup}
\newcommand{\unionSP}{\, \union \, }
\newcommand{\DisjointunionInText}%
    {{\smash{\overset{\mbox{\boldmath{.}}}{\bigcup}}}\vphantom{\bigcup}}
\newcommand{\intnfirst}[1]{[{#1}]}
\newcommand{\Lor}{\bigvee}
\newcommand{\Land}{\bigwedge}
\newcommand{\xor}{\operatorname{\textsc{xor}}}
\newcommand{\olnot}[1]{\overline{#1}}
\newcommand{\stdnot}[1]{\olnot{#1}}
\newcommand{\falsenum}{0}
\newcommand{\dnfform}{DNF for\-mu\-la\xspace}
\newcommand{\xdnf}[1]{\mbox{\ensuremath{#1}-DNF}\xspace}
\newcommand{\xdnfform}[1]{\mbox{\ensuremath{#1}-}\dnfform}
\newcommand{\kdnfform}{\xdnfform{\clwidth}}
\newcommand{\xdnfset}[1]{\xdnf{#1} set\xspace}
\newcommand{\kdnfset}{\xdnfset{\clwidth}}
\newcommand{\nvar}{n}
\newcommand{\nclause}{m}
\newcommand{\clwidth}{k}
\newcommand{\randkcnfnclwrepl}[3][\clwidth]%
        {\ensuremath{\mathcal{F}^{#2, #3}_{#1}}}
\newcommand{\randkcnfnclwreplstd}% 
        {\randkcnfnclwrepl{\clwidth}{\nvar}{\nclause}}
\newcommand{\israndkcnfnclwrepl}[4]%
  {\isdistras{#1}{\randkcnfnclwrepl[#2]{#3}{#4}}}
\newcommand{\randkcnfprobcl}[3]%
        {\ensuremath{\mathcal{F}^{#2}_{#1} \bigl(#3 \bigr)}}
\newcommand{\pcfor}[4][to]{for #2 := #3 #1 #4 do}
\newcommand{\pcformath}[4][to]%
    {\pcfor[#1]{\ensuremath{#2}}{\ensuremath{#3}}{\ensuremath{#4}}}
\newcommand{\pcassigncompact}[2]{#1 := #2}
\newcommand{\pcassignmathcompact}[2]%
        {\pcassigncompact{\ensuremath{#1}}{\ensuremath{#2}}}
\newcommand{\inductionformat}[1]{\textit{#1}}
\newcommand{\BASE}[1][]
        {\inductionformat
                {%
                        \ifthenelse{\equal{#1}{}}%
                                {Base case: }%
                                {Base case (#1):}%
                }%
        }
\newtheorem{theorem}{Theorem}
\newtheorem{lemma}[theorem]{Lemma}
\newtheorem{proposition}[theorem]{Proposition}
\newtheorem{corollary}[theorem]{Corollary}
\newtheorem{observation}[theorem]{Observation}
\newtheorem{definition}[theorem]{Definition}
\newtheorem{conjecture}{Conjecture}
\newtheorem{openproblem}[conjecture]{Open Problem}
\newcounter{unnumber}
\newtheorem{observation}[theorem]{Observation}
\newtheorem{conjecture}{Conjecture}
\newtheorem{openquestion}{Open Question}
\newtheorem{remarkinner}[theorem]{Remark}
\newtheorem{exampleinner}[theorem]{Example}
\newcommand{\exampleendmarker}{\qquad$\Diamond$}
\newcommand{\remarkendmarker}{\qquad$\Diamond$}
\newenvironment{example}                        
    {\begin{exampleinner} \rm}
    {\exampleendmarker\end{exampleinner}}
\newenvironment{remark}                        
    {\begin{remarkinner} \rm}
    {\remarkendmarker\end{remarkinner}}
\newcounter{unnumber}
\or \boolean{maybeElsevier}}
\newtheorem{standardlocalcounter}{Dummy}[section]
\newtheorem{standardglobalcounter}{Dummy}
\theoremstyle{plain}    
\newtheorem{theorem}[standardlocalcounter]{Theorem}
\newtheorem{lemma}[standardlocalcounter]{Lemma}
\newtheorem{proposition}[standardlocalcounter]{Proposition}
\newtheorem{corollary}[standardlocalcounter]{Corollary}
\newtheorem{observation}[standardlocalcounter]{Observation}
\newtheorem{conjecturelocalcounter}[standardlocalcounter]{Conjecture}
\newtheorem{conjectureglobalcounter}[standardglobalcounter]{Conjecture}
\newtheorem{conjecture}[standardglobalcounter]{Conjecture}
\newtheorem{openquestion}[standardglobalcounter]{Open Question}
\newtheorem{openproblem}[standardglobalcounter]{Open Problem}
\newtheorem{problem}{Problem}
\theoremstyle{definition}
\newtheorem{property}[standardlocalcounter]{Property}
\newtheorem{definition}[standardlocalcounter]{Definition}
\newtheorem{claim}[standardlocalcounter]{Claim}
\theoremstyle{remark}
\newtheorem{remark}[standardlocalcounter]{Remark}
\newtheorem{example}[standardlocalcounter]{Example}
\newtheoremstyle{meta}% name
  {3pt}%      Space above
  {3pt}%      Space below
  {\scshape \small }%         Body font
  {}%         Indent amount (empty = no indent, \parindent = para indent)
  {\scshape \small }% Thm head font
  {:}%        Punctuation after thm head
  { }%     Space after thm head: " " = normal interword space;
\theoremstyle{meta}
\newtheorem{meta}{Meta comment}
\newtheoremstyle{questions}% name
  {3pt}%      Space above
  {3pt}%      Space below
  {\sffamily \slshape}%         Body font
  {}%         Indent amount (empty = no indent, \parindent = para indent)
  {\bfseries \sffamily \slshape}% Thm head font
  {:}%        Punctuation after thm head
  { }%     Space after thm head: " " = normal interword space;
\theoremstyle{questions}
\newtheorem{questions}{Open questions}
\newtheorem{standardlocalcounter}{Dummy}[chapter]
\newtheorem{standardglobalcounter}{Dummy}
\theoremstyle{plain}    
\newtheorem{theorem}[standardlocalcounter]{Theorem}
\newtheorem{lemma}[standardlocalcounter]{Lemma}
\newtheorem{conjecture}[standardglobalcounter]{Conjecture}
\newtheorem{openproblem}[standardglobalcounter]{Open Problem}
\theoremstyle{definition}
\newtheorem{definition}[standardlocalcounter]{Definition}
\theoremstyle{remark}
\newtheoremstyle{meta}% name
  {3pt}%      Space above
  {3pt}%      Space below
  {\scshape \small }%         Body font
  {}%         Indent amount (empty = no indent, \parindent = para indent)
  {\scshape \small }% Thm head font
  {:}%        Punctuation after thm head
  { }%     Space after thm head: " " = normal interword space;
\theoremstyle{meta}
\newtheoremstyle{questions}% name
  {3pt}%      Space above
  {3pt}%      Space below
  {\sffamily \slshape}%         Body font
  {}%         Indent amount (empty = no indent, \parindent = para indent)
  {\bfseries \sffamily \slshape}% Thm head font
  {:}%        Punctuation after thm head
  { }%     Space after thm head: " " = normal interword space;
\theoremstyle{questions}
\or \boolean{maybeThesis}}
\def\SetTime{\hours=\time
\global\divide\hours by 60
\minutes=\hours
\multiply\minutes by 60
\advance\minutes by-\time
\global\multiply\minutes by-1 }
\def\now{\number\hours:\ifnum\minutes<10 0\fi\number\minutes}
\newcommand{\formuladots}{\cdots}
\newcommand{\proofsystemformat}[1]{\ensuremath{\mathfrak{#1}}}
\newcommand{\resknot}[1][k]{\proofsystemformat{R}({#1})}
\newcommand{\deriveswithall}%
        {\vdash_{\!\!\!{\scriptscriptstyle \forall}}} 
\newcommand{\notderiveswithall}%
        {\nvdash_{\!\!\!{\scriptscriptstyle \forall}}} 
\newcommand{\clcfgtransitioncrammed}[2]%
        {\ensuremath{#1 \!\rightsquigarrow\! #2}}
\newcommand{\fvar}{{\ensuremath{G}}}
\newcommand{\falt}{\fvar}
\newcommand{\varx}{\ensuremath{x}}
\newcommand{\vary}{\ensuremath{y}}
\newcommand{\varz}{\ensuremath{z}}
\newcommand{\clausesetformat}[1]{\ensuremath{\mathbb{#1}}}
\newcommand{\clsd}{\clausesetformat{D}}
\newcommand{\setsofvarsorlitlarge}[2]%
        {\mathit{#1}\left({#2}\right)}
\newcommand{\setsofvarsorlit}[2]%
        {\mathit{#1}({#2})}
\newcommand{\setsofvarsorlitcompact}[2]%
        {\mathit{#1}\bigl({#2}\bigr)}
\newcommand{\setsofvarsorlitsmall}[2]
        {\mathit{#1}({#2})}
\newcommand{\setsofvarsorlitsup}[3]%
        {\mathit{#1}^{#2}({#3})}
\newcommand{\setsofvarsorlitsuplarge}[3]%
        {\mathit{#1}^{#2}\left({#3}\right)}
\newcommand{\setsofvarsorlitsupcompact}[3]%
        {\mathit{#1}^{#2}\bigl({#3}\bigr)}
\newcommand{\vars}[1]{\setsofvarsorlitsmall{Vars}{#1}}
\newcommand{\derivabbrev}[2]{\bigl( #1 \vdash #2 \bigr)}
\newcommand{\derivabbrevsmall}[2]{( #1 \vdash #2 )}
\newcommand{\derivabbrevcompact}[2]{\bigl( #1 \vdash #2 \bigr)}
\newcommand{\refutabbrevsmall}[1]{\derivabbrevsmall{#1}{\falsenum}}
\newcommand{\refutabbrevcompact}[1]{\derivabbrevcompact{#1}{\falsenum}}
\newcommand{\genericrefsmall}[3]%
    {{\mathit{#1}}_{#2}\refutabbrevsmall{#3}}
\newcommand{\genericrefcompact}[3]%
    {{\mathit{#1}}_{#2}\refutabbrevcompact{#3}}
\newcommand{\genericderiv}[4]%
    {{\mathit{#1}}_{#2}\derivabbrev{#3}{#4}}
\newcommand{\genericderivsmall}[4]%
    {{\mathit{#1}}_{#2}\derivabbrevsmall{#3}{#4}}
\newcommand{\genericderivcompact}[4]%
    {{\mathit{#1}}_{#2}\derivabbrevcompact{#3}{#4}}
\newcommand{\generictaut}[3]%
    {{\mathit{#1}}_{#2}\derivabbrev{}{#3}}
\newcommand{\generictautcompact}[3]%
    {{\mathit{#1}}_{#2}\derivabbrevcompact{}{#3}}
\newcommand{\generictautsmall}[3]%
    {{\mathit{#1}}_{#2}\derivabbrevsmall{}{#3}}
\newcommand{\formulaformat}[1]{\ensuremath{\mathit{#1}}}
\renewcommand{\formulaformat}[1]{\mathit{#1}}
\newcommand{\transitionarrow}{\rightsquigarrow}
\newcommand{\pebcfgtransition}[2]%
    {\ensuremath{#1 \transitionarrow #2}}
\newcommand{\pebcfgtransitionsqueeze}[2]%
    {#1 \! \transitionarrow \! #2}
\newcommand{\formatpebblingprice}[1]{\text{\textsl{\textsf{#1}}}}
\newcommand{\Pebblingprice}[1]%
    {\formatpebblingprice{Peb}\bigl(#1\bigr)}
\newcommand{\pebblingpricecompact}[1]%   OBSOLETE
    {\formatpebblingprice{Peb}\bigl(#1\bigr)}
\newcommand{\bwpebblingprice}[1]{\formatpebblingprice{BW-Peb}(#1)}
\newcommand{\Bwpebblingprice}[1]%
    {\formatpebblingprice{BW-Peb}\bigl(#1\bigr)}
\newcommand{\bwpebblingpricecompact}[1]%   OBSOLETE
    {\formatpebblingprice{BW-Peb}\bigl(#1\bigr)}
\newcommand{\bwpebpricepersistent}[1]%
    {\formatpebblingprice{BW-Peb}^{z}(#1)}
\newcommand{\Bwpebpricepersistent}[1]%
    {\formatpebblingprice{BW-Peb}^{z}\bigl(#1\bigr)}
\newcommand{\bwpebpricevisiting}[1]%
    {\formatpebblingprice{BW-Peb}^{\emptyset}(#1)}
\newcommand{\Bwpebpricevisiting}[1]%
    {\formatpebblingprice{BW-Peb}^{\emptyset}\bigl(#1\bigr)}
\newcommand{\pebpricepersistent}[1]%
    {\formatpebblingprice{Peb}^{z}(#1)}
\newcommand{\Pebpricepersistent}[1]%
    {\formatpebblingprice{Peb}^{z}\bigl(#1\bigr)}
\newcommand{\pebpricevisiting}[1]%
    {\formatpebblingprice{Peb}^{\emptyset}(#1)}
\newcommand{\Pebpricevisiting}[1]%
    {\formatpebblingprice{Peb}^{\emptyset}\bigl(#1\bigr)}
\newcommand{\bwpebblingpriceempty}[1]%
    {\formatpebblingprice{BW-Peb}^{\emptyset}(#1)}
\newcommand{\bwpebblingpriceemptycompact}[1]%
    {\formatpebblingprice{BW-Peb}^{\emptyset}\bigl(#1\bigr)}
\newcommand{\pebdeg}{\ensuremath{d}}
\newcommand{\pebaxcompact}[2]%
        [\pebdeg]{\ensuremath{\formulaformat{Ax}^{#1} \bigl(#2 \bigr)}}
\newcommand{\pqrxvar}[6]%
    {\ensuremath{\stdnot{\varx({#1})}_{#2} \lor \stdnot{\varx({#3})}_{#4} \lor % 
    \sourceclausexvar[#6]{#5}}}
\newcommand{\pqr}[6]%
    {\ensuremath{\stdnot{#1}_{#2} \lor \stdnot{#3}_{#4} \lor % 
    \sourceclausenodisplay[#6]{#5}}}
\newcommand{\pqrstd}{\pqr{p}{i}{q}{j}{r}{l}}
\newcommand{\pqrall}[6]%
        {\setdescrcompact
        {\pqr{#1}{#2}{#3}{#4}{#5}{#6}}{#2,#4 \in \intnfirst{\pebdeg}}}
\newcommand{\pqrallstd}%
        {\setdescrcompact{\pqrstd}{i,j \in \intnfirst{\pebdeg}}}
\newcommand{\sourceclausexvar}[2][n]%
        {\Lor_{#1 = 1}^{\pebdeg} \varx({#2})_{#1}}
\newcommand{\subsourceclausexvar}[3][n]%
        {\Lor_{#1 = {#2}}^{\pebdeg} \varx({#3})_{#1}}
\newcommand{\sourceclausexvarnodisplay}[2][n]%
        {\textstyle \Lor_{#1 = 1}^{\pebdeg} \varx({#2})_{#1}}
\newcommand{\sourceclausenodisplay}[2][n]%
        {\textstyle \Lor_{#1 = 1}^{\pebdeg} #2_{#1}}
\newcommand{\relativisation}[1]%
    {\ensuremath{\formulaformat{Rel}\bigl(#1 \bigr)}}
\newcommand{\formatfunctiontosubconfiguration}[1]{\mathsf{#1}}
\newcommand{\formatfunctiontomulti}[1]{\mathcal{#1}}
\DeclareMathOperator{\dummystar}{*}
\newcommand{\pebblingcontrNT}[2][G]%
 {\ensuremath{\dummystar\!\!\formulaformat{Peb}^{#2}_{#1}}}
\newcommand{\somenodetrueclausedeg}[2]{\formulaformat{All}_{#1}^{+}({#2})}
\newcommand{\slashedstrickenletter}[1]{{\backslash\mkern-9mu #1}}
\newcommand{\strikethroughcommand}[1]{\slashedstrickenletter{#1}}
\newcommand{\abovevertices}[2][G]%
    {{#1}_{#2}^{\hspace{-0.2 pt}\triangledown}}
\newcommand{\aboveverticesNR}[2][G]%
    {{#1}_{\strikethroughcommand{#2}}^{\hspace{-0.3 pt}\triangledown}}
\newcommand{\belowvertices}[2][G]%
    {{#1}^{#2}_{\hspace{-0.6 pt}\vartriangle}}
\newcommand{\belowverticesNR}[2][G]%
    {{#1}^{\strikethroughcommand{#2}}_{\hspace{-0.6 pt}\vartriangle}}
\newcommand{\lpebblingpricecompact}[1]% 
    {\formatpebblingprice{L-Peb}\bigl(#1\bigr)}
\newcommand{\scnot}[2]{#1 \langle #2 \rangle}
\newcommand{\scnotcompact}[2]{#1 \bigl\langle #2 \bigr\rangle}
\newcommand{\spcanonconfcompact}[1]%
        {\formatfunctiontosubconfiguration{canon}\bigl({#1}\bigr)}
\newcommand{\spprojsubsub}[4]%
    {\formatfunctiontosubconfiguration{proj}_{\scnot{#1}{#2}}(\scnot{#3}{#4})}
\newcommand{\spprojsubsubcompact}[4]%
    {\formatfunctiontosubconfiguration{proj}_{\scnot{#1}{#2}}%
    \bigl(\scnot{#3}{#4}\bigr)}
\newcommand{\spprojsubconf}[3]%
    {\formatfunctiontosubconfiguration{proj}_{\scnot{#1}{#2}}({#3})}
\newcommand{\spprojsubconfcompact}[3]%
    {\formatfunctiontosubconfiguration{proj}_{\scnot{#1}{#2}}\bigl({#3}\bigr)}
\newcommand{\spprojconfsub}[3]%
    {\formatfunctiontosubconfiguration{proj}_{#1}(\scnot{#2}{#3})}
\newcommand{\spprojconfsubcompact}[3]%
    {\formatfunctiontosubconfiguration{proj}_{#1}\bigl(\scnot{#2}{#3}\bigr)}
\newcommand{\spprojconfconf}[2]%
    {\formatfunctiontosubconfiguration{proj}_{#1}({#2})}
\newcommand{\spprojconfconfcompact}[2]%
    {\formatfunctiontosubconfiguration{proj}_{#1}\bigl({#2}\bigr)}
\newcommand{\spclossubcompact}[2]%
        {\formatfunctiontoset{cl}\bigl(\scnotcompact{#1}{#2}\bigr)}
\newcommand{\spintersubcompact}[2]%
        {\formatfunctiontoset{int}\bigl(\scnotcompact{#1}{#2}\bigr)}
\newcommand{\spcoversubcompact}[2]%
        {\formatfunctiontoset{cover}\bigl(\scnotcompact{#1}{#2}\bigr)}
\newcommand{\spcoverconfcompact}[1]%
        {\formatfunctiontoset{cover}\bigl({#1}\bigr)}
\newcommand{\spinducedblack}[1]%
    {\formatfunctiontoset{Bl} (#1)}
\newcommand{\spinducedwhite}[1]%
    {\formatfunctiontoset{Wh} (#1)}
\newcommand{\spinducedblackcompact}[1]%
    {\formatfunctiontoset{Bl} \bigl(#1 \bigr)}
\newcommand{\spinducedwhitecompact}[1]%
    {\formatfunctiontoset{Wh} \bigl(#1 \bigr)}
\newcommand{\pathclausedeg}[2][\pebdeg]%
    {\somenodetrueclausedeg[#1]{\vertexpath{#2}}}
\newcommand{\pathclauseNRdeg}[2][\pebdeg]%
    {\somenodetrueclausedeg[#1]{\vertexpathNR{#2}}}
\newcommand{\blacktruthdegexplicit}[4]% 
        {\setdescrcompact
        {{\textstyle \Lor_{#2 = 1}^{#3} {#1}_{#2}}}
        {{#1} \in {#4}}}
\newcommand{\binsubtree}[1]{T^{#1}}
\newcommand{\vertexpath}[1]{{P}^{#1}}
\newcommand{\vertexpathNR}[1]{{P}_{*}^{#1}}
\newcommand{\unrelatedNP}[1]%
        {T \setminus \bigl(\binsubtree{#1} \unionSP \vertexpath{#1} \bigr)}
\newcommand{\unrelatedsmallNP}[1]%
        {T \setminus (\binsubtree{#1} \unionSP \vertexpath{#1} )}
\newcommand{\abovelevelblockerminsizecompact}%
    [2]{L_{\succeq{#1}}\bigl({#2}\bigr)}
\newcommand{\necessaryhidingvert}[2]%
{{#1}{\scriptstyle{\llfloor {#2} \rrfloor}}}
\newcommand{\Klawepropertyprefix}{Limited hiding-cardinality\xspace}
\newcommand{\klawepropacronym}{LHC property\xspace}
\newcommand{\nongenklaweprop}%
{non-generalized \Klawepropertyprefix property\xspace}
\newcommand{\nongenklawepropacronym}%
{non-generalized \klawepropacronym}
\newcommand{\nongenklawepropacronymWithParam}%
{(non-generalized) \klawepropacronym}
\newcommand{\siblingnonreachabiblitypropertynoref}%
{Sibling non-reachability property\xspace}
\newcommand{\Siblingnonreachabiblitypropertynoref}%
{Sibling non-reachability property\xspace}
\newcommand{\siblingnonreachabiblityproperty}%
{\siblingnonreachabiblitypropertynoref~% 
\ref{property:sibling-non-reachability-property}\xspace}
\newcommand{\Siblingnonreachabiblityproperty}%
{\Siblingnonreachabiblitypropertynoref~%
\ref{property:sibling-non-reachability-property}\xspace}
\newcommand{\introducetermanmpctext}%
    {a \introduceterm{\mpctext{}}\xspace}
\newcommand{\introducetermamultipebblingtext}%
  {a \introduceterm{\multipebblingtext{}}\xspace}
\newcommand{\blobpebblingtext}{blob-pebbling\xspace}
\newcommand{\multipebblingtext}{\blobpebblingtext}
\newcommand{\mpcostblack}[1]%
        {\formatpebblingprice{cost}_{\mpcblacks}( #1 )}
\newcommand{\mpcostwhite}[1]%
        {\formatpebblingprice{cost}_{\mpcwhites}( #1 )}
\newcommand{\blobpebblingpricecompact}[1]% 
    {\formatpebblingprice{Blob-Peb}\bigl(#1\bigr)}
\newcommand{\multipebblingpricecompact}[1]% 
    {\formatpebblingprice{Blob-Peb}\bigl(#1\bigr)}
\newcommand{\mpcblacks}{\formatfunctiontomulti{B}}
\newcommand{\mpcwhites}{\formatfunctiontomulti{W}}
\newcommand{\mpscnotcompact}[2]%
        {\big[ {#1} \big] \bigl\langle {#2} \bigr\rangle}
\newcommand{\mpctext}{\blobpebblingtext con\-fig\-u\-ra\-tion\xspace}
\newcommand{\chargeablevertices}[1]%        
{\formatfunctiontoset{chargeable}({#1}) }
\newcommand{\chargeableverticescompact}[1]%        
{\formatfunctiontoset{chargeable}\bigl({#1}\bigr) }
\newcommand{\blackschargedfor}[1][]% 
    {\mpcblacks_{#1}}
\newcommand{\whiteschargedfor}[1][]% 
    {\mpcwhites_{#1}^{\hspace{-0.3 pt}\vartriangle}}
\newcommand{\whitesbelowjustblocked}%
    {\mpcwhites_{B}^{\hspace{-0.3 pt}\vartriangle}}
\newcommand{\whitesbelowhidden}%
    {\mpcwhites_{H}^{\hspace{-0.3 pt}\vartriangle}}
\newcommand{\whitestight}%
    {\mpcwhites_{T}^{\hspace{-0.3 pt}\vartriangle}}
\newcommand{\pebcontrwithfunc}[3][G]{\formulaformat{Peb}^{#2}_{#1}[{#3}]}
\newcommand{\xorpebcontrtext}{XOR-pebbling contradiction\xspace}
\newcommand{\xorpebcontr}[2][G]{\pebcontrwithfunc[#1]{#2}{\xor}}
\newcommand{\weight}[1]{\lvert{#1}\rvert}
\renewcommand{\xor}{\oplus}
\newcommand{\Xor}{\bigoplus}
\newcommand{\cdkxorform}[1][k]%
    {\mbox{$\bigl(\land \!\! \lor \!\!  \xor^{#1}\bigr)$}-block  formula\xspace}
\newcommand{\trm}{T}
\newcommand{\varu}{u}
\newcommand{\varv}{v}
\renewcommand{\varz}{z}
\newcommand{\funcw}{W}
\newcommand{\vecx}{\vec{x}}
\renewcommand{\varu}{z}
\renewcommand{\varv}{w}
\begin{document}

%
% TITLE PAGE
%

\title{On Minimal Unsatisfiability and  \\
  Time-Space Trade-offs for $k$-DNF Resolution}

\author{%
  Jakob Nordström%
  \thanks{%
    Research supported by
    the Royal Swedish Academy of Sciences,
    the Ericsson Research Foundation,
    the Sweden-America Foundation,
    the Foundation Olle Engkvist Byggmästare, and
    the Foundation Blanceflor Boncompagni-Ludovisi, née Bildt.}  \\
  Computer Science and Artificial Intelligence Laboratory  \\
  Massachusetts Institute of Technology  \\
  Cambridge, MA 02139, USA  \\
  \texttt{jakobn@mit.edu}
  \and
  Alexander Razborov\thanks{Part of this work was done while with
Steklov Mathematical Institute, supported by the Russian Foundation
for Basic Research, and with Toyota Technological Institute at Chicago.} \\
  Department of Computer Science  \\
  University of Chicago \\
  Chicago, IL 60637, USA \\
  \texttt{razborov@cs.uchicago.edu}}

%    \date{\Now}
\date{\today}

\maketitle

%
% PAGESTYLE AND LEFT AND RIGHT RUNNING HEADS
%

% Totally empty header and footer on first page
\thispagestyle{empty}
%
% Only page number centered in footer (but this is issued automatically
% by article class)
%    \thispagestyle{plain}
%
% Begin new page (for STOC/FOCS submission and the like)
%    \newpage
%
% Start number pages from page 1 (i.e., reset page counter) with
% ordinary arabic numerals 1, 2, 3, ... (for STOC/FOCS submission)
%    \pagenumbering{arabic}

% Pagestyle using fancyhdr package
%

\pagestyle{fancy}     % results in default fancy style
% clear all header and footer fields
\fancyhead{}
\fancyfoot{}
% no rule in header or footer
\renewcommand{\headrulewidth}{0pt}
\renewcommand{\footrulewidth}{0pt}
%
% title centered on even pages, section centered on odd pages in header
% page number centered in footer
\fancyhead[CE]{\slshape ON MINIMAL UNSATISFIABILITY
  AND TIME-SPACE TRADE-OFFS}
\fancyhead[CO]{\slshape \leftmark}
\fancyfoot[C]{\thepage}
%
% section title centered on even and subsection centered on odd pages in header
% page number centered in footer
% \fancyhead[CE]{\slshape \leftmark}
% \fancyhead[CO]{\slshape \rightmark}
% \fancyfoot[C]{\thepage}

% Increase headheight to avoid following warning message
%    ``Package Fancyhdr Warning: \headheight is too small (12.0pt):
%    Make it at least 13.59999pt.
%    We now make it that large for the rest of the document.
%    This may cause the page layout to be inconsistent, however.''
\setlength{\headheight}{13.6pt}

%
% AND HERE THE PAPER PROPER BEGINS
%

\begin{abstract}
  In the context of proving lower bounds on proof space in $k$-DNF resolution,
  [Ben-Sasson and Nordström 2009]
  introduced the concept of minimally unsatisfiable sets of $k$-DNF
  formulas and proved that a minimally unsatisfiable $k$-DNF set with
  $m$ formulas can have at most
  $\Bigoh{(mk)^{k+1}}$ variables. They also gave an example of such sets
  with $\Omega(mk^2)$ variables.

  In this
  paper  %  note
%
% COMMENT: I agree that this is essentially a note only --- and
% indeed, I have consistently referred to it as a note in my own
% write-up --- but on second thoughts I would like to refer to it as a
% "paper." I think that it is unnecessary to send a clear signal to
% the referees right from the outset that "even the authors think that
% this is a small and unimportant contribution and this can therefore
% safely be rejected from the conference."
%
  we significantly improve the lower bound to $\Omega(m)^k$, which %
  almost matches the upper bound above.  Furthermore, we show
  that this implies that the analysis of their technique for proving
  time-space separations and trade-offs for $k$-DNF resolution is
  almost tight. This means that although it is possible, or even
  plausible, that stronger results than in [Ben-Sasson and Nordström
  2009] should hold, a fundamentally different approach would be
  needed to obtain such results.
\end{abstract}

\section{Introduction}
\label{sec:introduction}

A formula in conjunctive normal form, or {\em CNF formula}, is said to be {\em minimally unsatisfiable}
%    A CNF formula is {\em minimal unsatisfiable}
if it is unsatisfiable but deleting any clause makes the
formula satisfiable. A well-known result by Tarsi~%
\cite{AL86Minimal},
reproven  %   re-proven
several times by various authors
(see, for instance, \cite{BET01MinimallyUnsatisfiable, CS88ManyHard, Kullmann00Matroid}),
states that the number of variables in any such
CNF formula    %   CNF
is always at most $(m-1)$, where $m$ is the number of clauses.

Motivated by certain problems in proof complexity related to the space
measure in the so-called {\em \mbox{$k$-DNF} resolution} proof systems
introduced by Kraj{\'\i}{\v{c}}ek~\cite{K01OnTheWeak}, Ben-Sasson and
Nordström~\cite{BN09SpaceHierachy} developed a generalization of the
concept of minimal unsatisfiability to conjunctions of formulas in
disjunctive normal form where all terms in the
disjunctions have size at most~$k$, henceforth {\em $k$-DNF
formulas}.  We begin
by  %   with
reviewing their definition.

%  Motivated by certain problems in proof complexity related to proof space in $k$-DNF resolution,
%  \cite{BN09SpaceHierachy} introduced a generalization of this concept to conjunctions of $k$-DNF formulas,
%  and we begin with reviewing their definition.

  Assume that $\clsd=\{D_1,\ldots,D_m\}$ is the set of $k$-DNF formulas appearing in our conjunction,
  and that $\clsd$ itself is unsatisfiable.
What should it mean that   %   %    How do we say that
$\clsd$ is {\em minimally} unsatisfiable?

  The first, naive, attempt
at a definition
  would be to require, by analogy with the $k=1$ case,
  that $\clsd$ becomes satisfiable after removing
  any $D_i$ from it. However, the following simple example of two 2-DNF formulas
  \begin{equation}
  \label{eq:trivial-supposedly-minimal-formula}
\{ (x\land y_1)\lor\ldots\lor (x\land y_n),\ (\bar x_1\land y_1)\lor\ldots\lor (\bar x\land y_n)\}
  \end{equation}
  that is minimally unsatisfiable in this sense
  shows that we can not hope to get any meaningful analogue of Tarsi's lemma under
  this assumption only.

The reason for this is that the $2$-DNF set
\eqref{eq:trivial-supposedly-minimal-formula} is {\em not} minimally
unsatisfiable in the following sense: even if we ``weaken'' a formula in the
set (i.e., make it easier to satisfy) by removing any, or
even all, $y$-variables, then what remains is still an unsatisfiable
set.  This leads us to the stronger (and arguably more
natural) notion that the formula set should be minimally unsatisfiable
not only with respect to removing DNF formulas but also with respect to shrinking
terms (\ie conjunctions) in these formulas.  Fortunately, this also
turns out to be just the right notion for
%
%    The set (\ref{eq:trivial-supposedly-minimal-formula}), however, is {\em not} minimally unsatisfiable
%    in the following, stronger sense: if we ``weaken'' (i.e., make it easier to satisfy)
%    this set by removing any, or even all, $y$-variables, then the remaining set will still
%    be unsatisfiable. This naturally leads us to the corresponding stronger notion
%    of minimal unsatisfiability, and, fortunately, even this notion suffices for
%
the proof complexity applications given in \cite{BN09SpaceHierachy} (for details, we
refer either to that paper or to Section \ref{sec:implications-for-trade-offs} below).
Therefore, following~\cite{BN09SpaceHierachy}, we say that a set $\clsd$ of \kdnfform{}s is
\introduceterm{minimally unsatisfiable} if weakening any single term
(i.e., removing from it any literal) appearing in
a $k$-DNF formula from $\clsd$ will make the ``weaker'' set of formulas satisfiable.
This leads to the following question:
\begin{quotation}
  \noindent
  \emph{How many variables {\rm (}as a function of $k$ and $m${\rm )} may appear in a minimally
  unsatisfiable set $\{D_1,\ldots,D_m\}$ of $k$-DNF formulas?}
\end{quotation}

%
% COMMENT: Below, I would like to expand a bit to motivate why this is
% not only a combinatorial curiosity but that it can be argued that
% this is a natural and relevant question that the reader should care
% about.
%

Tarsi's lemma thus states that for $k=1$ the answer is $(m-1)$.
This result has a relatively elementary proof based on Hall's marriage
theorem, but its importance to obtaining lower bounds on resolution
length and space is hard to overemphasize. For instance, the seminal
lower bound on refutation length of random CNF formulas in~%
\cite{CS88ManyHard} makes crucial use of it, as does the proof of the
``size-width trade-off'' in~%
\cite{BW01ShortProofs}. Examples of applications of this theorem in
resolution space lower bounds include
\cite{ABRW02SpaceComplexity,
  BG03SpaceComplexity,
  BN08ShortProofs,
  BN08UnderstandingSpace,
  NH08TowardsOptimalSeparationSTOC,
  Nordstrom09NarrowProofsSICOMP}.

To the best of our
knowledge, the case $k\geq 2$ had not been studied prior to \cite{BN09SpaceHierachy}.
That paper established  %  They proved
an $\Bigoh{(mk)^{k+1}}$ upper bound and an $\Bigomega{mk^2}$ lower bound on the
number of variables. The gap is large, and, as one of their open questions, the authors
asked to narrow it.

%    \bigskip
In this
paper,  %  note
we give an almost complete answer to that question by proving an $\Omega(m)^k$
lower bound on the number of variables. Our construction is given in  \refsec{sec:new-lower-bound},
following a little bit of preliminaries in \refsec{sec:preliminaries}. Then, in
\refsec{sec:implications-for-trade-offs}, we discuss certain consequences of our
result to proof complexity, the bottom line here being that in order to
improve on the space complexity bounds from \cite{BN09SpaceHierachy}, a
different approach
would be  %  is
needed.
The
paper  %  note
is concluded with a few remarks and open problems in
\refsec{sec:concluding-remarks}.

\section{Preliminaries}
\label{sec:preliminaries}

Recall that a DNF formula is a disjunction of terms, or conjunctions,
of literals, \ie unnegated or negated variables.
If all terms have size at most $k$, then the formula is
referred to as a {\em $k$-DNF formula}
(where $k$ should be thought of as some arbitrary but fixed constant).

\begin{definition}[\cite{BN09SpaceHierachy}]
  \label{def:minimal-dnf}
  A set of DNF formulas $\clsd$ is \introduceterm{minimally unsatisfiable}
  if it is unsatisfiable and
  furthermore, replacing any single term $\trm$ appearing in a single
  DNF formula $D\in\clsd$ with a proper subterm of $\trm$ makes the
  resulting set satisfiable.
\end{definition}

Note that this indeed generalizes the well-known notion of minimally
unsatisfiable CNF formulas, where a ``proper subterm'' of a literal is
the empty term $1$ that is always true and ``weakening'' a clause
hence corresponds to removing it from the formula.

%    This generalizes well-known notion of minimally unsatisfiable CNF
%    formulas (where a ``proper subterm'' of a literal is the empty term
%    $1$ that is always true).

We are interested in bounding the number of variables of a minimally
unsatisfiable $k$-DNF set  in terms of the number of formulas in the
set. For $1$-DNF sets (\ie CNF formulas), Tarsi's lemma \cite{AL86Minimal} states that the number of variables must be at most the number of formulas (\ie clauses) minus one for minimal unsatisfiability to hold.
This bound is easily seen to be tight by considering the
example
\begin{equation}
  \label{eq:min-unsat-CNF-formula}
  \{ x_1,x_2,\ldots,x_n,\ \bar x_1\lor \bar x_2\lor\ldots\lor \bar x_n \}
  \eqperiod
\end{equation}
No such bound holds for
general~$k$, however, since
there is an easy construction shaving off
%    the following easy construction shaves off
a \mbox{factor $k^2$.}
Namely,  %  More precisely,
denoting by $\vars{\clsd}$ the set of variables appearing somewhere in $\clsd$, we have
the following lemma.

\begin{lemma}[\cite{BN09SpaceHierachy}]
  \label{lem:explicit-construction-min-unsat-kDNF}
  There are
  arbitrarily large
  minimally unsatisfiable sets  $\clsd$ of $k$-DNF formulas
  with
  $\setsize{\vars{\clsd}}\geq k^2 ( \setsize{\clsd} - 1)$.
\end{lemma}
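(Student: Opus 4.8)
The plan is to produce, for every $n\ge 1$, a minimally unsatisfiable $k$-DNF set with $n+1$ formulas and exactly $k^2 n$ variables by ``inflating'' each of the $n$ coordinates of the tight CNF example~\eqref{eq:min-unsat-CNF-formula} into a $k\times k$ block of fresh variables. Concretely, I would take variables $x_i^{a,b}$ for $1\le i\le n$ and $1\le a,b\le k$, think of $M_i=(x_i^{a,b})_{a,b}$ as a $k\times k$ matrix, and set
\[
  D_i \;=\; \bigvee_{a=1}^{k}\Bigl(\bigwedge_{b=1}^{k} x_i^{a,b}\Bigr)
  \quad (1\le i\le n),
  \qquad
  D_0 \;=\; \bigvee_{i=1}^{n}\;\bigvee_{b=1}^{k}\Bigl(\bigwedge_{a=1}^{k}\bar x_i^{a,b}\Bigr),
\]
so that $D_i$ asserts ``some \emph{row} of $M_i$ is all-ones'' while $D_0$ asserts ``some \emph{column} of some $M_i$ is all-zeros''. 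Put $\clsd=\{D_0,D_1,\ldots,D_n\}$. Every term has exactly $k$ literals, so this is a bona fide $k$-DNF set; it has $\setsize{\clsd}=n+1$ and $\setsize{\vars{\clsd}}=k^2 n=k^2\bigl(\setsize{\clsd}-1\bigr)$ with $n$ arbitrary, and for $k=1$ it degenerates precisely to~\eqref{eq:min-unsat-CNF-formula}. What remains is to check unsatisfiability and minimality.

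Unsatisfiability I expect to be immediate from a row/column clash: if $\alpha\models\clsd$ then each $D_i$ forces an all-ones row of $M_i$ and $D_0$ forces an all-zeros column, say column $b_0$ of $M_{i_0}$, and the cell of $M_{i_0}$ where that row meets column $b_0$ is then forced to be both $1$ and $0$. For minimality (in the sense of \refdef{def:minimal-dnf}), I would first observe that it suffices to treat weakenings that delete a \emph{single} literal from a width-$k$ term: deleting more literals only weakens a term, and hence the whole set, so a satisfying assignment after one deletion still satisfies the set after any further deletion from the same term. There are then two symmetric cases. Deleting $x_i^{a,b_0}$ from a term of some $D_i$: set $M_i$ to $1$ on the cells $\{(a,b):b\ne b_0\}$ and to $0$ on all its remaining cells, and set every other $M_j$ to all-ones; then row $a$ (off column $b_0$) satisfies the weakened term of $D_i$, column $b_0$ of $M_i$ satisfies the $(i,b_0)$-term of $D_0$, and each $M_j$ satisfies $D_j$. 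Deleting $\bar x_i^{a_0,b}$ from the $(i,b)$-term of $D_0$: set $M_i$ to $1$ exactly on row $a_0$ and $0$ elsewhere, and every other $M_j$ to all-ones; then row $a_0$ satisfies $D_i$, the cells $\{(a,b):a\ne a_0\}$ of $M_i$ (all $0$) satisfy the weakened $(i,b)$-term of $D_0$, and each $M_j$ satisfies $D_j$. In either case the weakened set is satisfiable, as required.

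The one genuine design decision — ``hard'' only in that it is the single choice that must be exactly right — is the asymmetry between the $D_i$, which speak about rows, and $D_0$, which speaks about columns. This is precisely what lets one $k\times k$ block of fresh variables be simultaneously large (it contributes $k^2$ variables to a single coordinate) and critical at the level of an individual literal: deleting one literal from one width-$k$ term frees exactly one matrix cell, and the row/column mismatch is arranged so that this lone freed cell is exactly the amount of slack needed to flip the set from unsatisfiable to satisfiable. Any more symmetric gadget — rows everywhere, or an unstructured bag of $k^2$ variables per coordinate — would break one of largeness, single-literal criticality, or unsatisfiability; once the row/column split is fixed, everything else is the routine case analysis sketched above.
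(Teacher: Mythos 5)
Your construction is correct and is essentially the paper's: both proofs inflate each variable of the tight CNF example~\eqref{eq:min-unsat-CNF-formula} into a $k\times k$ block of $k^2$ fresh variables, replacing each positive unit clause by ``some row of the block is all ones,'' which yields $n+1$ formulas over $k^2n$ variables. The only difference is in the negative formula: the paper takes the generic $k$-DNF expansion of the negated gadget (all $k^k$ transversal terms per block), whereas you use just the $k$ column terms per block --- a logically stronger but smaller choice --- and your unsatisfiability and single-literal-deletion case analysis for that variant is complete and correct.
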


\begin{proof}[Proof sketch]
  Consider any minimally unsatisfiable CNF formula consisting of $n+1$
  clauses over $n$~variables (for example,
the one given in~%
  \eqref{eq:min-unsat-CNF-formula}).
  Substitute every variable $\varx_{i}$ with
  \begin{equation}
    \label{eq:explicit-construction-substitution}
    \bigl(\varx_{i}^{1} \land \varx_{i}^{2} \land \formuladots \land \varx_{i}^{k}\bigr)
    \,\lor\,
    \bigl(\varx_{i}^{k+1} \land \varx_{i}^{k+2} \land \formuladots \land \varx_{i}^{2k}\bigr)
    \,\lor\,
    \formuladots
    \,\lor\,
    \bigl(\varx_{i}^{k^2-k+1} \land \varx_{i}^{k^2-k+2} \land \formuladots \land \varx_{i}^{k^2}\bigr)
  \end{equation}
  and expand every clause to a $k$-DNF formula.
  It is straightforward to verify that the result is a minimally
  unsatisfiable \kdnfset,
  and this set has
  \mbox{$n+1$ formulas}
  over
  \mbox{$k^2 n$ variables}.
\end{proof}

There is a big gap between this lower bound on the number of variables
(in terms of the number of formulas) and the upper bound obtained in
\cite{BN09SpaceHierachy}, stated next.

\begin{theorem}[\cite{BN09SpaceHierachy}]
  \label{th:min-unsat-k-DNF-lower-bound}
  Suppose that $\clsd$ is a minimally unsatisfiable \kdnfset
  containing
  $m$
  formulas. Then
  $\setsize{\vars{\clsd}}\leq \left(km\right)^{k+1}$.
\end{theorem}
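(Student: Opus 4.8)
The plan is to split the estimate into an essentially trivial counting reduction and the real combinatorial content, which is a bound on the number of terms appearing in a single formula.

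\emph{Normalization and consequences of minimality.} First I would make the harmless normalization that no formula in $\clsd$ is the constant $0$ or $1$ and every term has at least one literal (a constant-$1$ formula can be deleted, and a constant-$0$ formula forces $\clsd=\{0\}$, a trivial case). Then minimal unsatisfiability in the sense of \refdef{def:minimal-dnf} yields two facts. (a) Deleting any single formula $D_i$ leaves a satisfiable set: weakening one literal of one term $T$ of $D_i$ to a proper subterm gives a satisfiable set, and since $\clsd$ is unsatisfiable its satisfying assignment must falsify every term of $D_i$ while satisfying all $D_j$ with $j\neq i$; equivalently $\bigwedge_{j\neq i}D_j\models\neg D_i$. (b) Applying the same reasoning to a term $T$ of $D_i$ and a literal $\ell\in T$ gives $\bigwedge_{j\neq i}D_j\models\neg T$ while $\bigwedge_{j\neq i}D_j\not\models\neg(T\setminus\{\ell\})$; thus each width-$\leq k$ clause $\neg T$ is an implicate of $\{D_j:j\neq i\}$ that is minimal under deletion of literals.

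\emph{Reduction to counting terms.} Every variable of $\clsd$ occurs in some term and every term has at most $k$ literals, so $\setsize{\vars{\clsd}}\leq k\cdot(\text{total number of terms in }\clsd)\leq k m M$, where $M$ is the maximum number of terms appearing in a single formula of $\clsd$. Hence it suffices to prove $M\leq(km)^k$.

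\emph{Bounding the number of terms, by induction on $k$.} For $k=1$ this is Tarsi's lemma~\cite{AL86Minimal}: a clause of a minimally unsatisfiable CNF formula with $m$ clauses has width at most $m-1$. For the inductive step, fix a formula $D=T_1\lor\dots\lor T_s$ in a minimally unsatisfiable $k$-DNF set with $m$ formulas; the goal is $s\leq(km)^k$. Using (a), pick an assignment $\tvastd$ satisfying the other $m-1$ formulas; it falsifies $D$, so in each $T_j$ I can choose a literal $\ell_j\in T_j$ with $\tvastd(\ell_j)=0$. Since the $\ell_j$ are all falsified by the \emph{same} assignment, $\{\ell_j\}_j$ is consistent, so the restriction $\rstd$ setting every $\ell_j$ to true is well defined, and under $\rstd$ every term of $D$ either vanishes or loses at least one literal, so $\restrict{D}{\rstd}$ is a $(k-1)$-DNF formula (or a constant). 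I would then pass to a minimally unsatisfiable core of $\restrict{\clsd}{\rstd}$, bound how many of the original $T_j$ can restrict to one and the same term of $\restrict{D}{\rstd}$, and apply the induction hypothesis (in $k$ and in the number of formulas) to the residual $(k-1)$-DNF instance, tuning the bookkeeping so that the factors multiply out to $(km)^k$. Feeding this back into the previous paragraph gives $\setsize{\vars{\clsd}}\leq km\cdot(km)^k=(km)^{k+1}$.

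\emph{Main obstacle.} The delicate step is exactly the induction above: the restriction $\rstd$ collapses only the one chosen formula, so $\restrict{\clsd}{\rstd}$ is not itself a $(k-1)$-DNF set, and passing to a minimally unsatisfiable core can merge or delete terms of $\restrict{D}{\rstd}$ and can change the number of formulas. Choosing the core so that the surviving structure still controls $s$, and keeping the formula count from blowing up (one may be forced to split a formula into several, which is presumably the source of the extra factor that makes the exponent $k+1$ rather than $k$), is where the real work lies; this is the heart of the argument in~\cite{BN09SpaceHierachy}.
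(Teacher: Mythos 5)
This theorem is imported verbatim from \cite{BN09SpaceHierachy}; the present paper states it without proof, so there is no internal argument to compare yours against. Judged on its own, your write-up is a correct reduction followed by an announced plan whose essential step is missing. The preliminary facts (a) and (b) are right, and the counting reduction $\setsize{\vars{\clsd}}\leq k\cdot m\cdot M$ is valid, so the theorem would indeed follow from the bound $M\leq(km)^k$ on the number of terms in a single formula. But that bound is where the entire content of the theorem lives, and your inductive sketch does not establish it.

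Concretely, the induction as described would fail or at least stall at several points. The restriction $\rstd$ turns only the one distinguished formula $D$ into a $(k-1)$-DNF formula; the other $m-1$ members of $\restrict{\clsd}{\rstd}$ are still $k$-DNF formulas, so the induction hypothesis --- which concerns minimally unsatisfiable $(k-1)$-DNF \emph{sets} --- does not apply to the restricted set. While $\restrict{\clsd}{\rstd}$ is certainly still unsatisfiable, it need not be minimally so, and ``passing to a minimally unsatisfiable core'' in the sense of \refdef{def:minimal-dnf} (which is minimality under term shrinking, not formula deletion) is an operation that can discard or merge exactly the terms of $\restrict{D}{\rstd}$ whose number you are trying to relate to $s$; you give no argument bounding how many original terms $T_j$ can collapse onto one surviving term, nor any control on the number of formulas in the residual instance. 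You identify all of this yourself in the closing paragraph and defer it to \cite{BN09SpaceHierachy}, which is honest, but it means the heart of the proof is asserted rather than carried out: what you have is a plausible strategy and a correct bookkeeping shell around an unproved combinatorial lemma.
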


A natural problem is to close, or at least narrow, the gap between
\reflem{lem:explicit-construction-min-unsat-kDNF}
and
\refth{th:min-unsat-k-DNF-lower-bound}.
In this work, we do so by substantially improving the bound in
\reflem{lem:explicit-construction-min-unsat-kDNF}.

\section{An Improved Lower Bound for Minimally Unsatisfiable Sets}
\label{sec:new-lower-bound}

In this section, we present our construction establishing that the
number of variables in a minimally unsatisfiable $k$-DNF set can be at
least the number of formulas raised to the $k$th power.

%    least a $k$th degree of the number of formulas.

\begin{theorem}
  \label{th:new-explicit-construction-min-unsat-kDNF}
  There exist
  arbitrarily large
  minimally unsatisfiable $k$-DNF sets $\clsd$ with
  $m$
  formulas over
  more than
  $
%      \geq
  \bigl(
  \frac{m}{4} \bigl( 1 - \frac{1}{k} \bigr)
  \bigr)^{k}$
  variables.
\end{theorem}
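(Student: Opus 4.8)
The plan is to give an explicit family, generalizing the construction behind \reflem{lem:explicit-construction-min-unsat-kDNF} but arranging things so that the blow-up in the number of variables is a $k$th power rather than a mere factor of $k^2$. It is worth first recalling why the earlier construction stops at $k^2$: it substitutes a fixed small gadget (a disjunction of $k$ conjunctions of width $k$) for each variable of the minimally unsatisfiable CNF formula $\{x_1,\dots,x_n,\,\bar x_1\vee\dots\vee\bar x_n\}$, and the single wide clause is replaced by the \emph{negation} of these gadgets; since the negation of a DNF with $t$ terms is again a $k$-DNF only when $t\le k$, each gadget is forced to have at most $k$ terms of width $k$, hence at most $k^2$ variables. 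Any improvement has to avoid paying this ``negation tax''.

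The idea I would pursue is a \emph{recursive / hierarchical} construction. Think of $\{x_1,\dots,x_n,\,\bar x_1\vee\dots\vee\bar x_n\}$ as ``$n$ local demands $x_i=1$ together with one global conflict''. To go from width $k-1$ to width $k$, I would take roughly $n$ disjoint copies of the width-$(k-1)$ object on fresh variables and glue them together with one extra layer of literals, so that the new layer supplies the conflict structure at the coarsest scale while the recursively built copies supply it at all finer scales. The crucial point is that a single literal in the new layer can be made to conflict simultaneously with every term of every inner copy --- exactly the thing the substitution approach could not do, where the wide clause had to be ``spent'' term by term --- so the gluing costs only $O(1)$ additional formulas and no increase in term width. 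Consequently each level multiplies the variable count by $\Theta(n)$ while adding only a bounded number of formulas, and after $k$ levels one obtains a set with $\approx n$ formulas and $\approx n^{k}$ variables; a careful accounting of the few variables and formulas consumed by the gluing at each level is what turns ``$m^{k}$'' into the precise bound $\bigl(\tfrac m4(1-\tfrac1k)\bigr)^{k}$, almost matching the upper bound of \refth{th:min-unsat-k-DNF-lower-bound}.

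The bulk of the work, and the part I expect to be the real obstacle, is verifying minimal unsatisfiability of the assembled set: producing, for every term $T$ occurring in some formula of the set and every literal $\ell\in T$, a satisfying assignment for the set obtained by deleting $\ell$ from $T$. Unsatisfiability itself should fall out of the way the copies are made mutually inconsistent. For minimality one has to exhibit, at each level of the hierarchy, an assignment that falsifies every term of the formula containing $T$ except the weakened one, keeps the weakened term alive, and is still consistent with all the other (unweakened) formulas; this amounts to checking that at each scale the relevant ``positive'' and ``negative'' terms conflict on a \emph{single} variable --- the one carrying $\ell$ --- and not on two, so that exactly one coordinate of freedom is left. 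Designing the extra layer at each level so that this one-variable-conflict property holds for every $T$ and every $\ell$, at every scale simultaneously and without the inner copies' variables interfering, is the delicate point; once the construction is set up correctly, this verification together with the final optimization of the parameters (the base size $n$ and how the recursion depth relates to $k$) should be routine bookkeeping.
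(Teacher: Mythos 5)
Your high-level idea --- use the width-$k$ terms to index roughly $m^{k-1}$ parallel instances of the basic minimally unsatisfiable CNF formula, with a selector mechanism guaranteeing that only one instance is ``active'' --- is indeed the idea behind the paper's construction. But the recursive packaging you propose has a concrete accounting problem that you do not address, and it is fatal as stated. If at each level you take $n$ \emph{disjoint} copies of the level-$(k-1)$ object on fresh variables, then each copy brings along its own $\Theta(n)$ formulas (including its own internal selector constraints on its own fresh variables), so after $k$ levels you have $n^{\Theta(k)}$ formulas, not $\approx n$. The only way to keep the formula count at $O(mk)$ is to \emph{merge} corresponding formulas across all copies at a given depth and to \emph{share} the selectors rather than giving each copy its own; but merging the selector-constraint formulas of disjoint selectors by disjunction destroys their meaning (a disjunction of ``copy $1$'s selector has weight $\le 1$'' and ``copy $2$'s selector has weight $\le 1$'' no longer constrains either copy). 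The paper resolves this by abandoning the recursion in favor of a flat construction: there are $k-1$ \emph{global} coordinate vectors $\vec{x}^1,\ldots,\vec{x}^{k-1}$, each of dimension $m(k-1)$ and each constrained by its own weight-$\le 1$ gadget, and the $(m(k-1))^{k-1}$ instances are indexed by tuples $(i_1,\ldots,i_{k-1})$, so that a single width-$k$ term $x^1_{i_1}\land\cdots\land x^{k-1}_{i_{k-1}}\land y^{\nu}_{i_1,\ldots,i_{k-1}}$ names an instance and a payload variable simultaneously. This sharing across dimensions, rather than across levels of a recursion, is the missing idea.

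The second gap is that the two pieces you defer as ``routine bookkeeping'' are in fact the mathematical content of the proof. The selector gadget itself --- a $k$-DNF set with $O(m)$ formulas and width-$k$ terms that \emph{minimally} expresses that a vector of $m(k-1)$ variables has Hamming weight at most $1$, meaning that weakening any single term must admit a satisfying assignment of weight at least $2$ --- cannot be written as a single bounded-width formula; it requires a chain of auxiliary variables and occupies all of \refsec{sec:filling-in-details}, and its minimality is exactly what propagates into the minimality of the assembled set (your case ``$O(1)$ additional formulas per level'' is also inconsistent with this: each selector costs $\Theta(m)$ formulas). Likewise the final verification that the assembled set is minimally unsatisfiable is a multi-case analysis over the formula types carried out in \refsec{sec:result}. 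Without a concrete construction one cannot check the ``one-variable-conflict property'' you correctly identify as the crux, so as written the proposal is a plan rather than a proof.
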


In particular, for any
$k \geq 2$
there are minimally unsatisfiable $k$-DNF sets with
$m$ formulas over (more than)
$(m/8)^k$
variables.

Very loosely,
we will use the power afforded by the $k$-terms to construct a
$k$-DNF set $\clsd$
consisting of
roughly $m$
formulas that encode
roughly $m^{k-1}$
``parallel'' instances of the minimally unsatisfiable CNF formula
in~\refeq{eq:min-unsat-CNF-formula}.
These parallel instances will be indexed by coordinate vectors
$\bigl(x^1_{i_1}, x^2_{i_2}, \ldots, x^{k-1}_{i_{k-1}}\bigr)$.
We will add auxiliary formulas enforcing that only one coordinate vector
$\bigl(x^1_{i_1}, x^2_{i_2}, \ldots, x^{k-1}_{i_{k-1}}\bigr)$
can have all coordinates true. This vector identifies which instance
of the formula~\refeq{eq:min-unsat-CNF-formula}
we are focusing on, and all other parallel instances are falsified by
their coordinate vectors not having all coordinates true.

We now formalize this loose intuition.
We first present the auxiliary formulas placing the constraints on our
coordinate vectors, which are the key to the whole construction.

\subsection{A Weight Constraint $k$-DNF Formula Set}
\label{sec:filling-in-details}

Let us write
$\vecx = \bigl( \varx_1, \ldots, \varx_{m(k-1)} \bigr)$
to denote a vector of variables of dimension $m(k-1)$.
Let
$\weight{\vecx}=\sum_{i=1}^{m(k-1)}x_i$
denote the \introduceterm{Hamming weight} of
$\vecx$,
\ie the number of ones in it.
We want to construct a $k$-DNF set
$\funcw_m(\vecx)$
with
$\bigoh{m}$
formulas over
$\varx_1, \ldots, \varx_{m(k-1)}$
and some auxiliary variables
minimally expressing that
$\weight{\vecx} \leq 1$.
That is, a vector $\vecx$
can be extended to a satisfying assignment for~%
$\funcw_m(\vecx)$ if and only if $\weight{\vecx} \leq 1$
but if we weaken any formula in the set, then there are
satisfying assignments with
$\weight{\vecx} \geq 2$.

We define
$\funcw_m(\vecx)$
to be the set of $k$-DNF formulas listed next.
The intuition for the auxiliary variables is that
$\varu_j$ can be set to true only if the first $j(k-1)$ variables
$\varx_1, \ldots, \varx_{j(k-1)}$
are all false, and
$\varv_j$ can be set to true only if at most one of
the first $j(k-1)$ variables
$\varx_1, \ldots, \varx_{j(k-1)}$
is true.
\begin{subequations}
\begin{align}
  \label{eq:weight-constraints-u1}
  &
  \olnot{\varu}_1 \lor
  \bigl( \olnot{\varx}_1 \land \formuladots \land \olnot{\varx}_{k-1}\bigr)
  \\
  \label{eq:weight-constraints-u2}
  &
  \olnot{\varu}_2 \lor
  \bigl( \varu_1 \land
  \olnot{\varx}_k \land \formuladots \land \olnot{\varx}_{2(k-1)}\bigr)
  \\
  \nonumber
  & \vdots
  \\
  \label{eq:weight-constraints-um-1}
  &
  \olnot{\varu}_{m-1} \lor
  \bigl( \varu_{m-2} \land
  \olnot{\varx}_{(m-2)(k-1)+1} \land \formuladots \land
  \olnot{\varx}_{(m-1)(k-1)}\bigr)
%      \\
%      \label{eq:weight-constraints-um}
%      &
%      \olnot{\varu}_{m} \lor
%      \bigl( \varu_{m-1} \land
%      \olnot{\varx}_{(m-1)(k-1)+1} \land \formuladots \land
%      \olnot{\varx}_{m(k-1)}\bigr)
  \\
  \label{eq:weight-constraints-v1}
  &
  \olnot{\varv}_1 \lor \varu_1 \lor
  \Lor_{i=1}^{k}
  \Land_{\substack{i'=1 \\ i' \neq i}}^{k} \olnot{\varx}_{i'}
  \\
  \label{eq:weight-constraints-v2}
  &
  \olnot{\varv}_2 \lor \varu_2 \lor
%      \Lor_{i=k}^{2(k-1)}
  \bigl( \varv_1 \land
  \olnot{\varx}_k \land \formuladots \olnot{\varx}_{2(k-1)} \bigr)
  \lor
  \Lor_{i=k}^{2(k-1)}
  \biggl( \varu_{1} \land
  \Land_{\substack{i'=k \\ i' \neq i}}^{2(k-1)}
  \olnot{\varx}_{i'}
  \biggr)
  \\
  \nonumber
  & \vdots
  \\
  \label{eq:weight-constraints-vm-1}
  &
  \begin{aligned}
    \olnot{\varv}_{m-1} \lor \varu_{m-1} \lor
    \bigl( \varv_{m-2} & \land
    \olnot{\varx}_{(m-2)(k-1)+1}  \land \formuladots
    \land \olnot{\varx}_{(m-1)(k-1)} \bigr)
    \\
    &
    \lor
    \Lor_{i=(m-2)(k-1)+1}^{(m-1)(k-1)}
    \biggl( \varu_{m-2} \land
    \Land_{\substack{i'=(m-2)(k-1)+1 \\ i' \neq i}}^{(m-1)(k-1)}
    \olnot{\varx}_{i'}
    \biggr)
  \end{aligned}
  \\
  \label{eq:weight-constraints-without-vm}
  &
  \begin{aligned}
%        \varu_m \lor
    \bigl( \varv_{m-1}  \land
    \olnot{\varx}_{(m-1)(k-1)+1}  & \land \formuladots
    \land \olnot{\varx}_{m(k-1)} \bigr)
    \\
    &
    \lor
    \Lor_{i=(m-1)(k-1)+1}^{m(k-1)}
    \biggl( \varu_{m-1} \land
    \Land_{\substack{i'=(m-1)(k-1)+1 \\ i' \neq i}}^{m(k-1)}
    \olnot{\varx}_{i'}
    \biggr).
  \end{aligned}
%      \\
%      \label{eq:weight-constraints-vm-true}
%      &
%      \varv_m
\end{align}
\end{subequations}
The set of $k$-DNF formulas
$\funcw_{m}$
contains
$2m-1$~formulas.
Let us see that
$\funcw_{m}$
minimally expresses that
$\vecx$ has weight at most~$1$. For
ease of notation, we will call the group of
variables $\{x_{(j-1)(k-1)+1},\ldots, x_{j(k-1)}\}$ the
\introduceterm{$j$th block} and denote it by $X_j$.

%    \medskip\noindent
{\bf Every $\vecx$ with $\weight{\vecx}\leq 1$ can be extended to a satisfying
assignment for $\funcw_m(\vecx)$.}
Since all $x$-variables appear only negatively, we can assume
\wolog   %   w.l.o.g.
that $\weight{\vecx}= 1$, say
all $\varx_i$ are false except for a single variable in the $j_0$th
block $X_{j_0}$. We simply set
$\varu_j$ to true for $j < j_0$ and false for  $j \geq j_0$, and we
set all $\varv_j$ to true.

%    \smallskip\noindent
{\bf Every satisfying assignment for $\funcw_m(\vecx)$ satisfies
$\weight{\vecx}\leq 1$.}
Assume on the contrary that $\varx_{i_1}=\varx_{i_2}=1$; $i_1\in X_{j_1},\ i_2\in X_{j_2}$;
$j_1\leq j_2$.
We have that the truth of
$\varx_{i_1}$
forces
$\varu_j$
to false for all $j \geq j_1$,
and then
$\varx_{i_2}=1$
forces
$\varv_j$
to false for all $j \geq j_2$.
But this means that there is no way to satisfy the final formula
\refeq{eq:weight-constraints-without-vm}.
So for all satisfying assignments it must hold that
$\weight{\vecx} \leq 1$.

%    \smallskip\noindent
{\bf After weakening any term in $\funcw_m(\vecx)$, the resulting set can be
satisfied by an assignment giving weight at least 2 to $\vecx$.}
First we notice that weakening any of the unit terms (i.e., terms of
size one) results in removing the formula in question altogether. This
can only make it easier to satisfy the whole set than if we just
shrink a $k$-term. Hence, \wolog we can
focus on shrinking the $k$-terms.
%%% SASHA: I was under impression that the two preceeding sentences give a
%%% complete (or at least as complete as others in this paper :-) ) proof,
%%% why write "little thought reveals"?
%
%    First we  notice that weakening any of the unit terms (i.e., terms of length
%    one) results in removing the formula in question altogether that makes
%    the chances of the whole set to be satisfied better than if we arbitrarily shrink any
%    of its $k$-terms. Hence w.l.o.g. we can focus on shrinking $k$-terms.
%
Let us consider the formulas in
$\funcw_{m}(\vecx)$
one by one.

If we remove some literal $\olnot{x}_i$ in
\refeq{eq:weight-constraints-u1}--%
\refeq{eq:weight-constraints-um-1},
%    then
we can set
$x_i = 1$
but still have
$\varu_1 = \formuladots = \varu_{m-1} = 1$.
This will allows us to set also
$x_{m(k-1)} = 1$
in
\refeq{eq:weight-constraints-without-vm}
and still satisfy the whole set of formulas although
$\weight{\vecx} \geq 2$.

If we instead remove some~$\varu_{j}\ (j\leq m-2)$ in these formulas,
then we can set all $x_i=1$ for $x_i\in X_1\cup\ldots\cup X_j$
(that already gives us weight $\geq 2$) and $z_1=\ldots=z_j=0$,
and then we set $z_{j+1}=\ldots=z_m=1$ and $x_i=0$ for $x_i\in X_{j+1}\ldots\cup\ldots X_m$.
Note that $j\leq m-2$ implies that $z_{m-1}=1$ which takes care of
\refeq{eq:weight-constraints-without-vm}, and then \refeq{eq:weight-constraints-v1}--\refeq{eq:weight-constraints-vm-1}
are satisfied simply be setting all $w_j$ to 0. This completes
the analysis of the formulas
\refeq{eq:weight-constraints-u1}--%
\refeq{eq:weight-constraints-um-1}.

In formula
\refeq{eq:weight-constraints-v1},
if we remove some $\olnot{\varx}_{i'}$
in
$\Land_{i'=1,\ i' \neq i}^{k} \olnot{\varx}_{i'}$,
then we can set
$\varx_{i} = \varx_{i'} = \varv_{1} = 1$
and extend this to a satisfying assignment for the rest of the formulas.

For the corresponding terms
$
\varu_{j-1} \land
\Land_{i'=(j-1)(k-1)+1, \ i' \neq i}^{j(k-1)}
\olnot{\varx}_j
$
in
\refeq{eq:weight-constraints-v2}--%
\refeq{eq:weight-constraints-without-vm},
if we remove some $\olnot{\varx}_{i'}$,
we can again set
$\varx_{i} = \varx_{i'} = 1$
and set $z_1=\ldots=z_{j-1}=1$ and then $w_j=\ldots=w_{m-1}=1$
to satisfy the rest of the set,
whereas removing
$\varu_{j-1}$
would allow us to assign to 1
all $x_i\in X_1\cup\ldots\cup X_{j-1}$ and then still assign
$w_j=\ldots=w_{m-1}=1$.

For the other kind of terms
$
\varv_{j-1} \land
\olnot{\varx}_{(j-1)(k-1)+1} \land \formuladots
\land \olnot{\varx}_{j(k-1)}
$
in
\refeq{eq:weight-constraints-v2}--%
\refeq{eq:weight-constraints-without-vm},
if some $\olnot{\varx}_i$ with $x_i\in X_j$
is removed,
we can set this
$\varx_i$ to true
as well as an arbitrary $x_{i'}\in X_1\cup\ldots\cup X_{j-1}$,
whereas removing
$\varv_{j-1}$
would allow as again to set to 1 all variables
in $X_1\cup\ldots X_{j-1}$.
This proves the minimality of
$\funcw_m(\vecx)$.

%    ~~~~~~~~~~~~~~~~~~~~~~~~~~~

\subsection{The Minimally Unsatisfiable $k$-DNF Set}
\label{sec:result}

Let us write
$\vecx^j = \bigl( \varx^j_1, \varx^j_2,\ldots, \varx^j_{m(k-1)} \bigr)$,
and let
$\funcw^j_m(\vecx^j)$
be the $k$-DNF set with $\bigoh{m}$ formulas constructed above
(over disjoint sets of variables for distinct~$j$)
minimally expressing that
$\weight{\vecx^j} \leq 1$.
With this notation, let
$\clsd^k_m$
be the $k$-DNF set consisting of the following formulas:
\begin{subequations}
\begin{align}
  \label{eq:k-dnf-weight-constraints}
  &
  \funcw^j_m(\vecx^j)
  &&
  1 \leq j < k
  \\
  \label{eq:k-dnf-positive}
  &
  \Lor_{(i_1, i_2, \ldots, i_{k-1}) \in \intnfirst{m(k-1)}^{k-1}}
  \biggl(
  \varx^1_{i_1} \land
  \varx^2_{i_2} \land
  \formuladots \land \varx^{k-1}_{i_{k-1}} \land
  \vary^{\nu}_{i_1, i_2, \ldots, i_{k-1}}
%      \vary^{j}_{(i_1, i_2, \ldots, i_{k-1})}
  \biggr)
  &&
  1 \leq \nu \leq m(k-1)
  \\
  \label{eq:k-dnf-negative}
  &
  \bar u_\nu \, \lor \!\!
  \Lor_{(i_1, i_2, \ldots, i_{k-1}) \in \intnfirst{m(k-1)}^{k-1}}
  \biggl(
  \varx^1_{i_1} \land
  \varx^2_{i_2} \land
  \formuladots \land \varx^{k-1}_{i_{k-1}} \land
  \olnot{\vary}^{\nu}_{i_1, i_2, \ldots, i_{k-1}}
%      \olnot{\vary}^{j}_{(i_1, i_2, \ldots, i_{k-1})}
  \biggr)
  &&
  1 \leq \nu \leq m(k-1)
  \\
  \label{eq:k-dnf-z-variables}
  &
  u_1 \lor u_2 \lor \formuladots \lor u_{m(k-1)}.
\end{align}
\end{subequations}
It is worth noting that the range of the index $\nu$ does not have
any impact on the following proof of minimal unsatisfiability, and it
was set to $m(k-1)$ only to get the best numerical results.

It is easy to verify that
$\clsd^k_m$
consists of less than
$4mk$
$k$-DNF formulas over
more than
$(m(k-1))^k = \bigl( \frac{1}{4} (4mk) \bigl( 1-\frac{1}{k} \bigr) \bigr)^k$
variables.
We claim that
$\clsd^k_m$
is minimally unsatisfiable, from which
\refth{th:new-explicit-construction-min-unsat-kDNF}
follows.

To prove the claim, let us first verify that
$\clsd^k_m$
is unsatisfiable. If the CNF formulas
$\funcw^j_m(\vecx)$
in
\refeq{eq:k-dnf-weight-constraints}
are to be satisfied for all $j < k$,
then there exists
at most one  %  precisely one
\mbox{$(k$$-1)$-tuple}
 $(i^*_1, i^*_2, \ldots, i^*_{k-1}) \in \intnfirst{m(k-1)}^{k-1}$
such that
$
\varx^1_{i^*_1},
\varx^2_{i^*_2},
\ldots,
\varx^{k-1}_{i^*_{k-1}}
$
are all true. This forces
$\vary^{j}_{(i^*_1, i^*_2, \ldots, i^*_{k-1})}$
to true for all $\nu$
to satisfy the formulas in
\refeq{eq:k-dnf-positive}, and then
\refeq{eq:k-dnf-negative} forces all $u_\nu$ to 0,
so that \refeq{eq:k-dnf-z-variables} is falsified.
Contradiction.

Let us now argue that
$\clsd^k_m$
is not only unsatisfiable, but \emph{minimally} unsatisfiable in the
sense of
\refdef{def:minimal-dnf}.
The proof is by case analysis over the different types of formulas
in~$\clsd^k_m$.
\begin{enumerate}
\item
  If we shrink any term in
  \refeq{eq:k-dnf-weight-constraints}---say, in
  $\funcw^1_m(\vecx^1)$, then by
  the minimality property in
  \refsec{sec:filling-in-details}
  we can set some % two variables
  $ \varx^1_{i'_1} =   \varx^1_{i''_1} = 1  $
  for
  $i'_1 \neq i''_1$
  and then fix some
  $\varx^2_{i^*_2} = \ldots = \varx^{k-1}_{i^*_{k-1}} = 1$
  without violating the remaining clauses in
  $\funcw^1_m(\vecx^1), \ldots, \funcw^{k-1}_m(\vecx^{k-1})$.
  This allows us to satisfy the formulas in
  \refeq{eq:k-dnf-positive}
  and
  \refeq{eq:k-dnf-negative}
  by setting
  $\vary^{\nu}_{(i'_1, i^*_2 \ldots, i^*_{k-1})} = 1$
  and
  $\vary^{j}_{(i''_1, i^*_2 \ldots, i^*_{k-1})} = 0$
  for all $\nu$, respectively.
  Finally, set any $u_j$ to true to satisfy
  \refeq{eq:k-dnf-z-variables}.
  This satisfies the whole \mbox{$k$-DNF} set.

\item
  Next, suppose that we shrink some term
  $
  \varx^1_{i^*_1} \land
  \varx^2_{i^*_2} \land
  \formuladots \land
  \varx^{k-1}_{i^*_{k-1}} \land
  \vary^{\nu}_{(i^*_1, \ldots, i^*_{k-1})}
  $
  in the $\nu$th  $k$-DNF formula in
  \refeq{eq:k-dnf-positive}.
  There are two subcases:
  \begin{enumerate}
  \item
    Some $\varx$-variable is removed, say,
    the variable
    $\varx^1_{i^*_1}$.
    Set
    $\varx^1_{i^*_1} = 0$
    and
    $
    \varx^2_{i^*_2} = \ldots = \varx^{k-1}_{i^*_{k-1}} =
    \vary^{\nu}_{(i^*_1, i^*_2, \ldots, i^*_{k-1})} = 1
    $.
    This satisfies the $\nu$th formula in
    \refeq{eq:k-dnf-positive}.
    Then pick some $i'_1 \neq i^*_1$ and set
    $\varx^1_{i'_1} = 1$.
    All this can be done in a way that satisfies all clauses in~%
    \refeq{eq:k-dnf-weight-constraints}
    since the weight of every $\vecx^j$ is one.
    Set
    $u_{\nu} = 1$
    and
    $u_{\nu'} = 0$ for all $\nu' \neq \nu$
    to satisfy
    \refeq{eq:k-dnf-z-variables}
    and then
    $\vary^{\nu}_{(i'_1, i^*_2, \ldots, i^*_{k-1})} = 0$
    to satisfy
    the $\nu$th formula in
    \refeq{eq:k-dnf-negative}
    (all others are satisfied by literals $\bar u_{\nu'}$, $\nu' \neq \nu$).
    The $\nu$th formula in
    \refeq{eq:k-dnf-positive}
    was satisfied above, and for all other
    $\nu' \neq \nu$
    we set
    $\vary^{\nu}_{(i'_1, i^*_2, \ldots, i^*_{k-1})} = 1$
    to satisfy the rest of the formulas in
    \refeq{eq:k-dnf-positive}.
    This satisfies the whole \mbox{$k$-DNF} set.

  \item
    The variable
    $\vary^{\nu}_{(i^*_1, \ldots, i^*_{k-1})}$
    is eliminated.
    If so, set
    $\varx^1_{i^*_1} =
%        \varx^2_{i^*_2} =
    \ldots = \varx^{k-1}_{i^*_{k-1}} = 1
    $
    to satisfy the $\nu$th formula in
    \refeq{eq:k-dnf-positive},
    $u_{\nu} = 1$
    and
    $\vary^{\nu}_{(i^*_1, \ldots, i^*_{k-1})} = 0$
    to satisfy
    \refeq{eq:k-dnf-z-variables}
    and the $\nu$th formula in
    \refeq{eq:k-dnf-negative},
    and
    $u_{\nu'} = 0$
    and
    $\vary^{\nu'}_{(i^*_1, \ldots, i^*_{k-1})} = 1$
    for all $\nu' \neq \nu$
    to satisfy the rest of the formulas in
    \refeq{eq:k-dnf-positive}
    and
    \refeq{eq:k-dnf-negative}.
    This is easily extended to an assignment  satisfying
    \refeq{eq:k-dnf-weight-constraints}
    as well.
  \end{enumerate}

\item
  For the  $\nu$th formula
  in
  \refeq{eq:k-dnf-negative}, we may assume, for the same reasons as
  in \refsec{sec:filling-in-details}, that we shrink a non-trivial
  $k$-term. Then
  we again have two subcases, treated similarly.
  \begin{enumerate}
  \item
    Some $\varx$-variable is removed, say
    $\varx^1_{i^*_1}$.
    Set
    $u_{\nu} = 1$,
    $\varx^1_{i^*_1} = 0$,
    $\varx^2_{i^*_2} = \ldots = \varx^{k-1}_{i^*_{k-1}} = 1$,
    and
    $\vary^{\nu}_{(i^*_1, i^*_2, \ldots, i^*_{k-1})} = 0$.
    This satisfies
    \refeq{eq:k-dnf-z-variables}
    and the $\nu$th formula in
    \refeq{eq:k-dnf-negative}.
    Setting
    $u_{\nu'} = 0$
    for $\nu' \neq \nu$
    takes care of the rest of
    \refeq{eq:k-dnf-negative}.
    To satisfy
    \refeq{eq:k-dnf-positive},
    we pick some $i'_1 \neq i^*_1$ and set
    $\varx^1_{i'_1} = 1$,
    and then set
    $\vary^{\nu'}_{(i'_1, i^*_2, \ldots, i^*_{k-1})} = 1$
    for all~$\nu'$.
    All this can be done in a way that satisfies the weight
    constraints in~%
    \refeq{eq:k-dnf-weight-constraints}.

  \item
    The literal
    $\olnot{\vary}^{\nu}_{(i^*_1, \ldots, i^*_{k-1})}$
    is eliminated.
    If so, set
    $\varx^1_{i^*_1} =
%        \varx^2_{i^*_2} =
    \ldots = \varx^{k-1}_{i^*_{k-1}} = 1
    $
    to satisfy the $\nu$th formula in
    \refeq{eq:k-dnf-negative}
    and
    $u_{\nu} = 1$
    to satisfy
    \refeq{eq:k-dnf-z-variables}.
    Setting
    $u_{\nu'} = 0$
    for
    $\nu' \neq \nu$
    takes care of the rest of
    \refeq{eq:k-dnf-negative}.
    Now we can satisfy all of
    \refeq{eq:k-dnf-positive}
    by setting
    $\vary^{\nu}_{(i^*_1, \ldots, i^*_{k-1})} = 1$
    for all~$\nu$,
    and it is once again easy to see that the weight constraints in
    \refeq{eq:k-dnf-weight-constraints}
    are also satisfied.
  \end{enumerate}
\item \refeq{eq:k-dnf-z-variables} is removed. Set all $u_\nu$ to 0, and set
all $y^\nu_{i_1,\ldots,i_k}$ to 1, then \refeq{eq:k-dnf-weight-constraints}--\refeq{eq:k-dnf-positive}
become easy to satisfy.
\end{enumerate}

This completes the proof that
$\clsd^k_m$
is minimally unsatisfiable as claimed,
and
\refth{th:new-explicit-construction-min-unsat-kDNF}
hence follows.

\section{Implications for Time-Space Trade-offs for \lowercase{$k$}-DNF Resolution}
\label{sec:implications-for-trade-offs}

% COMMENT: On second thoughts, it seems appropriate to give some of
% the context here for how our lower bounds fits in the bigger picture
% of space lower bounds and length-space trade-offs for
% resolution-based proof systems.

Let us start this section by a quick
%    and selective
review of the
relevant proof complexity context.
%
%    The family of proof systems known as \introduceterm{$k$-DNF
%    resolution} was introduced by
%
The \introduceterm{$k$-DNF resolution} proof systems were introduced by
Kraj{\'\i}{\v{c}}ek~\cite{K01OnTheWeak} as an intermediate step
between resolution and \mbox{depth-$2$} Frege. Roughly speaking,
%    for integers $k>0$
the $k$th member of this family, denoted henceforth by
$\resknot$, is a system for reasoning in terms of \mbox{$k$-DNF}
formulas.  For $k=1$, the lines in the proof are hence disjunctions of
literals, and the system $\resknot[1]$ is standard resolution. At the
other extreme, $\resknot[\infty]$ is equivalent to depth-$2$ Frege.

Informally, we can think of an $\resknot$-proof as being presented on a
blackboard. The allowed derivation steps are to write on the
board a clause of the CNF formula being refuted, to deduce a new
\mbox{$k$-DNF} formula from the formulas currently on the board,
or to erase formulas from the board. The
\introduceterm{length} of an $\resknot$-proof is the total number of
formulas appearing on the board (counted with repetitions) and
the \introduceterm{(formula) space} is the maximal number of formulas
simultaneously on the board at any time during the proof.

A number of works
\cite{AB04automatizabilityOfResolutionAndRelated,
  ABE02LowerBoundsWPHPandRandom,
  Alekhnovich05LowerBoundsk-DNF3-CNF,
  JN02OptimalLowerBound,
  Razborov03PseudorandomGeneratorsHard,
  SBI04SwitchingLemma,
  Segerlind05Exponential}
have shown superpolynomial lower bounds on the length of
\mbox{$k$-DNF} refutations. It has also been established in
\cite{SBI04SwitchingLemma,Segerlind05Exponential}
that the $\resknot$-family forms a strict
hierarchy with respect to proof length.
Just as in the case for standard resolution, however,
our understanding of space
complexity in $k$-DNF resolution has remained more limited.
Esteban et al.~\cite{EGM04Complexity}
established essentially optimal space lower bounds for $\resknot$ and
also proved that the family of \emph{tree-like} $\resknot$ systems
form a strict hierarchy with respect to space. They showed that there
are formulas $F_n$ of size $n$ that can be refuted in tree-like
\mbox{$(k+1)$-DNF} resolution in constant space but require space
$\Omega(n/\log^2 n)$ to be refuted in tree-like $k$-DNF resolution. It
should be pointed out, however, that
%    as observed in
%    \cite{K01OnTheWeak,EGM04Complexity}
tree-like $\resknot$ for any $k \geq 1$ is strictly weaker
than standard resolution, so the results
in \cite{EGM04Complexity} left open the question of whether
there is a strict space hierarchy for (non-tree-like) $k$-DNF
resolution or not.

Recently, the first author in joint work with Ben-Sasson
\cite{BN09SpaceHierachy}
proved that Kraj{\'\i}{\v{c}}ek's family of $\resknot$ systems do indeed form a
strict hierarchy with respect to space. However, the parameters of the
separation  were much worse than for the tree-like systems in
\cite{EGM04Complexity}, namely that the
$\resknot[k+1]$-proofs have constant space but any
$\resknot$-proof requires space
$\Bigomega{\sqrt[k+1]{n/\log n}}$.
It is not clear that there has to be a $(k+1)$st root in this
bound. No matching upper bounds are known, and indeed for the special
case of
$\resknot[2]$
versus
$\resknot[1]$
the lower bound is
$\Bigomega{n/\log n}$ by~\cite{BN08UnderstandingSpace},
\ie without a square root.
Also, combining
\cite{BN09SpaceHierachy}
with results in~\cite{BN08UnderstandingSpace}
one can derive strong length-space trade-offs for \mbox{$k$-DNF}
resolution, but again a $(k+1)$st root is lost in the analysis
compared to the corresponding results for standard resolution~%
$\resknot[1]$.

Returning now to the minimally unsatisfiable $k$-DNF sets, the reason for
studying this concept in
\cite{BN09SpaceHierachy}
was that is was an interesting special case of a more general problem
arising in their proof analysis,
and that is was hoped that better \emph{upper} bounds for this special
case would translate into improvements for the general case. Although there
appears to be no such obvious translation of \emph{lower} bounds from
the special to the general case, by using the ideas from the
previous section we can show that the analysis of the particular proof
technique employed in
\cite{BN09SpaceHierachy}
is almost tight. Thus, any further substantial  improvements of the bounds
in that paper
would have
to be obtained by other methods.

We do not go into details of the proof construction in
\cite{BN09SpaceHierachy}
here, since it is rather elaborate.
Suffice it to say that the final step of the proof boils down to studying
\mbox{$k$-DNF} sets that imply Boolean functions with a particular structure,
and proving lower bounds on the size of such DNF sets in terms of the
number of variables in these Boolean functions. Having come that far
in the construction, all that remains is a purely combinatorial
problem, and no reference to space proof complexity or \mbox{$k$-DNF}
resolution is needed.

For concreteness,
below   %   in this paper
we restrict our attention to the case
where the Boolean functions are exclusive or.
More general functions can be considered, and have been studied in
\cite{BN09SpaceHierachy,BN08UnderstandingSpace},
%    in the same framework,
and everything that will be said below applies
to such Boolean functions with appropriate (and simple) modifications.
Hence, from now on let us focus on DNF sets that minimally imply
a particular kind of formulas that we will refer to as
\introduceterm{\cdkxorform{}s}.
A \cdkxorform is
%    quite simply
a CNF formula in which every variable
$\varx$ is replaced by
$\Xor_{i=1}^{k} \varx_i$,
where
$\varx_1, \ldots, \varx_k$
are new variables not appearing in the original formula.
%    are new pairwise distinct variables.
Thus, literals turn into
unnegated or negated XORs, every XOR applies to exactly one ``block'' of $k$
variables,  and no XOR mixes variables from different blocks. Let us
write this down as a formal definition.

\begin{definition}
A
\introduceterm{\cdkxorform{}}
$\falt$
is a conjunction of disjunctions of negated or unnegated exclusive
ors. The variables of $G$ are divided into disjoint blocks
$\varx_1, \ldots, \varx_k$,
$\vary_1, \ldots, \vary_k$,
$\varz_1, \ldots, \varz_k$ et cetera,
of $k$~variables each, and every XOR or negated XOR is over one full
block of variables.
\end{definition}

The key behind the lower bounds on space in
\cite{BN09SpaceHierachy}
is the result that if a $k$-DNF set $\clsd$  implies a
\cdkxorform[k+1]
$\falt$ with many variables, then $\clsd$ must also be large.

\begin{theorem}[\cite{BN09SpaceHierachy}]
  \label{th:BN09-space-main-technical}
  Let $k$ be some fixed but arbitrary
  positive integer.   %   constant.
  Suppose that
  $\clsd$
  is a $k$-DNF set and that
  $\falt$
  is a
  \cdkxorform[k+1]
  such that
  $\clsd$
  implies
  $\falt$,
  and furthermore that
  $\falt$
  is minimal in the sense that if we remove a single
  XOR or negated XOR from
  $\falt$
  (thus making the formula stronger),
  it no longer holds that
  $\clsd$
  implies
  $\falt$.
  Then
  $\setsize{\vars{G}}   =  \Bigoh{\setsize{\clsd}^{k+1}}$.
\end{theorem}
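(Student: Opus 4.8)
The plan is to squeeze out of the minimality of $\falt$ a family of ``critical'' assignments, one per occurrence of an XOR-literal, and then to bound the number of blocks of $\falt$ by adapting the counting argument behind \refth{th:min-unsat-k-DNF-lower-bound}; the point where the $(k+1)$-block structure enters is that a $k$-term sees at most $k$ of the $k+1$ variables of a block, which is precisely the slack the estimate needs. We may assume that every block of $\falt$ occurs in some clause (otherwise its variables can be dropped from $\vars{\falt}$ at no cost) and that each clause involves each block at most once, so that $\setsize{\vars{\falt}} = (k+1) b$ with $b$ the number of blocks; since $k$ is fixed it suffices to show $b = \Bigoh{\setsize{\clsd}^{k+1}}$. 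First I would reformulate minimality: writing $\falt = \bigwedge_{C} C$ with each clause $C$ a disjunction of XOR-literals, for every clause $C$ and every literal $L \in C$ the hypotheses yield an assignment $\alpha_{C,L}$ that satisfies $\clsd$, makes $L$ true, and makes every other literal of $C$ false (so $C$ is satisfied only through $L$, while every other clause of $\falt$ is satisfied).

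Next I would localise to a single block $B$, picking a clause $C$ in which it occurs via a literal $L$ asserting $\Xor_{x \in B} x = c$. Restricting $\clsd$ by $\alpha_{C,L}$ on all variables outside $B$ yields a $k$-DNF set $\clsd_{B}$ on the $k+1$ variables of $B$ which is satisfiable (witnessed by $\alpha_{C,L}$ restricted to $B$) but all of whose satisfying assignments have parity $c$ on $B$: an assignment agreeing with $\alpha_{C,L}$ off $B$, satisfying $\clsd$, and having parity $1-c$ on $B$ would falsify $L$ while still falsifying the remaining literals of $C$ (these sit on other blocks), hence would falsify $\falt$ although it satisfies $\clsd$, contradicting that $\clsd$ implies $\falt$. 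Since any $k$-term satisfied by $\alpha_{C,L}$ omits at least one variable of $B$ and is therefore still satisfied after flipping that variable, no single $k$-term -- and hence no single formula of $\clsd$ -- can force the parity of $B$; that constraint must be produced by terms drawn from at least two formulas acting jointly. This is exactly the ``one super-variable'' instance of the configuration analysed in the proof of \refth{th:min-unsat-k-DNF-lower-bound}.

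Finally I would assemble these local facts over all $b$ blocks. The cleanest route would be to fabricate outright a minimally unsatisfiable $k$-DNF set $\clse$ with $\setsize{\clse} = \Bigoh{\setsize{\clsd}}$ formulas and $\Bigomega{b}$ variables and then quote \refth{th:min-unsat-k-DNF-lower-bound}; but this founders on the fact that negating a clause of $\falt$ forces one to express parities over $k+1$ variables, which is not a $k$-DNF (and, in a precise sense, only barely not one) -- this is why the exponent is $k+1$ rather than something smaller. So instead I would re-run the proof of \refth{th:min-unsat-k-DNF-lower-bound} itself -- which generalises the marriage-theorem proof of Tarsi's lemma -- treating each block as an atomic super-variable carrying $k+1$ literal-copies and letting the parity constraints coming from $\falt$ play the role that the weakenable clauses play there, concluding $b = \Bigoh{\setsize{\clsd}^{k+1}}$ and hence $\setsize{\vars{\falt}} = \Bigoh{\setsize{\clsd}^{k+1}}$.

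The step I expect to be the main obstacle is this last one, and the reason is that $\clsd$ is \emph{not} assumed minimal: all the criticality has to be extracted from the minimality of $\falt$ alone, so the global count cannot invoke \refth{th:min-unsat-k-DNF-lower-bound} as a black box but must redo its matching/sunflower-type core while bookkeeping which formulas of $\clsd$ are responsible for forcing each block's parity and how the per-block witnesses fit together simultaneously across all blocks. It is precisely the tightness of this combinatorial heart that the construction in \refsec{sec:new-lower-bound} is designed to probe.
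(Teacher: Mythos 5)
First, a point of order: the paper does not prove this theorem at all --- it is imported verbatim from \cite{BN09SpaceHierachy}, and the authors explicitly decline to reproduce the argument (``We do not go into details of the proof construction in \cite{BN09SpaceHierachy} here, since it is rather elaborate''). So there is no in-paper proof to measure your attempt against; what follows assesses the attempt on its own terms.

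Your preparatory steps are sound. The extraction of the critical assignments $\alpha_{C,L}$ from the minimality of $\falt$ is correct (removing $L$ from $C$ strengthens $\falt$, so a witness to the failed implication must satisfy $C$ through $L$ alone while satisfying every other clause), and so is the localisation: restricting by $\alpha_{C,L}$ outside a block $B$ produces a satisfiable $k$-DNF set over $k+1$ variables all of whose satisfying assignments share a fixed parity on $B$, and no single $k$-term (hence no single formula of $\clsd$, by flipping the variable the term omits) can enforce such a parity. These are exactly the right local facts. The genuine gap is that the entire quantitative content of the theorem lives in the step you defer: passing from ``each of the $b$ blocks is parity-forced by some joint interaction of formulas of $\clsd$'' to $b = \Bigoh{\setsize{\clsd}^{k+1}}$. ``Re-running the proof of \refth{th:min-unsat-k-DNF-lower-bound} with blocks as super-variables'' is a pointer to a proof rather than a proof, and --- as you yourself observe --- it cannot be carried out as stated, because the engine of that argument is the minimality of the $k$-DNF set with respect to shrinking \emph{its own} terms, a property $\clsd$ is not assumed to have here; all criticality resides in $\falt$, and the per-block witnesses $\alpha_{C,L}$ are different restrictions that must somehow be played off against one common collection of terms of $\clsd$. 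None of the machinery needed to do this (the induction on $k$, the choice of restrictions killing wide terms of $\clsd$ while preserving the precise-implication structure) is reconstructed in your proposal. Flagging the obstacle is not the same as surmounting it, so as written the proposal establishes a correct and useful set-up but not the theorem.
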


Using this theorem, one can get the
$\sqrt[k+1]{n / \log n}$ space separation
mentioned above
between
$k$-DNF resolution and
$(k$$+1)$-DNF resolution.
Any improvement in the exponent in the bound in
\refth{th:BN09-space-main-technical}
would immediately translate into an improved space separation, and
would also improve the time-space trade-offs one can get when
transferring the results in
\cite{BN08UnderstandingSpace}
from resolution to $k$-DNF resolution.

Prior to the current paper, the best lower bound giving limits on what
one could hope to achieve in
\refth{th:BN09-space-main-technical}
was linear, \ie
  $
  \setsize{\vars{G}}
  =
  \bigomega{\setsize{\clsd}}
  $.
%
%    \begin{fact}
%      There are arbitrarily large
%      $\clsd$
%      and
%      $\falt$
%      as in
%      \refth{th:BN09-space-main-technical}
%      with
%      $
%      \setsize{\vars{G}}
%      =
%      \bigomega{\setsize{\clsd}}
%      $.
%    \end{fact}
%
Namely, let $\falt$ be a conjunction of XORs
$
(\Xor_{i=1}^{k+1} \varx_i) \land
(\Xor_{i=1}^{k+1} \vary_i) \land
(\Xor_{i=1}^{k+1} \varz_i) \land
\formuladots
$
and let
$\clsd$
be the union of the expansions of every
$\Xor_{i=1}^{k+1} \varx_i $
as a CNF formula.
For this particular structure of
$\falt$
it is also easy to prove that
$
\setsize{\vars{G}}
=
\bigoh{\setsize{\clsd}}
$ for {\em any} choice of $\clsd$,
but it has been an open question what happens when we consider general
formulas~$\falt$.

For $k=1$,
\cite{BN08UnderstandingSpace}
proved that a linear bound
$\bigoh{\setsize{\clsd}}$
in fact holds for any
set of clauses $\clsd$ and any
\cdkxorform[2]
$\falt$,
but all attempts to extend the techniques used there to the case $k>1$
have failed. And indeed, they have failed for a good reason, since
building on the construction in
\refsec{sec:new-lower-bound}
we can show that this failure is due to the fact that the best
%    we can show that this failure is explained by the fact that the best
one can hope for in
\refth{th:BN09-space-main-technical}
is
$\setsize{\vars{G}}   =  \Bigoh{\setsize{\clsd}^{k}}$.

\begin{theorem}
  \label{th:limits-BN-technique}
  For any  $k > 1$
  there are arbirarily large
  $k$-DNF sets
  $\clsd$
  of size
  $\setsize{\clsd}=m$
  and
  \cdkxorform[k+1]{}s  $\falt$
  such that
  $\clsd$
  implies
  $\falt$,
  this implication is ``precise''
  in the sense that if we remove a single
  XOR or negated XOR from
  $\falt$
  it no longer holds that
  $\clsd$
  implies the strengthened formula,
  and
  $\setsize{\vars{G}}
  \geq (k+1) \bigl[ \frac{m}{k+2}\bigl( 1 - \frac{1}{k} \bigr) \bigr]^{k}
  \geq k \bigl( \frac{m}{4k} \bigr)^k$.
%      $\Bigomega{\setsize{\clsd}^{k} / k^{k-1}}$.
\end{theorem}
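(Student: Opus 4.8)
\emph{Proof plan.} The plan is to re-use the weight-constraint $k$-DNF set $\funcw_{m}$ from \refsec{sec:filling-in-details} and, in place of each ``cheap'' single variable $\vary^{\nu}_{i_1,\ldots,i_{k-1}}$ of the set $\clsd^{k}_{m}$ of \refsec{sec:new-lower-bound}, to attach a fresh block of $k+1$ variables, so that the $\xor^{k+1}$ structure is paid for once per block instead of once per formula. Concretely, let $\clsd$ consist of: (i) $k-1$ copies $\funcw^{1}_{m'}(\vecx^{1}),\ldots,\funcw^{k-1}_{m'}(\vecx^{k-1})$ over pairwise disjoint variable sets, each $\vecx^{j}$ having dimension $D = m'(k-1)$; and (ii) for every $\nu \in \intnfirst{N}$ and every $\ell \in \intnfirst{k+1}$ the single $k$-DNF formula
\[
\Lor_{(i_1,\ldots,i_{k-1}) \in \intnfirst{D}^{k-1}}
\bigl( \varx^{1}_{i_1} \land \varx^{2}_{i_2} \land \formuladots \land \varx^{k-1}_{i_{k-1}} \land \vary^{\nu,\ell}_{i_1,\ldots,i_{k-1}} \bigr) ,
\]
each of whose terms has width exactly $k$. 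The \cdkxorform[k+1] $\falt$ is
\[
\falt \;=\; \Land_{\nu = 1}^{N}\ \Lor_{(i_1,\ldots,i_{k-1}) \in \intnfirst{D}^{k-1}}
\Bigl( \Xor_{\ell = 1}^{k+1} \vary^{\nu,\ell}_{i_1,\ldots,i_{k-1}} \Bigr) ,
\]
where each XOR is taken unnegated when $k$ is even and negated when $k$ is odd. Morally, the clauses \refeq{eq:k-dnf-negative}--\refeq{eq:k-dnf-z-variables} that created the contradiction in $\clsd^{k}_{m}$ are here replaced by the single demand that $\clsd$ \emph{precisely} imply $\falt$.

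\emph{That $\clsd$ implies $\falt$.} Arguing exactly as in \refsec{sec:filling-in-details}: any satisfying assignment of $\clsd$ has $\weight{\vecx^{j}} \le 1$ for all $j$ by the gadgets, and each formula of type (ii) can hold only if some cell $(i_1,\ldots,i_{k-1})$ has all of $\varx^{1}_{i_1},\ldots,\varx^{k-1}_{i_{k-1}}$ true, which forces every $\weight{\vecx^{j}} \ge 1$ and hence $= 1$. So there is a unique active cell $i^{*}$, and the type-(ii) formula for $(\nu,\ell)$ collapses to forcing $\vary^{\nu,\ell}_{i^{*}} = 1$. Thus all $k+1$ variables of the block indexed by $(\nu,i^{*})$ are set to $1$, the XOR over that block equals the constant $(k+1) \bmod 2$, and the corresponding literal in the $\nu$-th clause of $\falt$ is satisfied. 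Hence $\clsd$ implies $\falt$, and since $\clsd$ is plainly satisfiable so is $\falt$.

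\emph{Preciseness.} Fix a block index $(\nu_0,c_0)$ and delete its XOR-literal from the $\nu_0$-th clause of $\falt$; I would certify that $\clsd$ no longer implies the strengthened formula by exhibiting a satisfying assignment of $\clsd$ that makes $c_0$ the active cell --- legitimate because $\funcw^{j}_{m'}$ expresses ``weight $\le 1$'' and is therefore satisfiable at every weight-$1$ point --- sets the $k+1$ block variables over $(\nu,c_0)$ to $1$ for every $\nu$, and leaves everything else free. The crucial observation is that every type-(ii) term mentioning a block over a cell $c \ne c_0$ is falsified by one of its $\varx$-literals, so those blocks are completely unconstrained by $\clsd$; giving each block over $(\nu_0,c)$ with $c \ne c_0$ an even parity falsifies the strengthened $\nu_0$-th clause, while every other clause of $\falt$ remains satisfied through its block over $(\nu,c_0)$. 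As every clause of $\falt$ has $D^{k-1} \ge 2$ literals, no clause is ever emptied, and this is exactly the required preciseness.

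\emph{Counting and the main obstacle.} Here $\setsize{\clsd} = (k-1)(2m'-1) + N(k+1)$ while $\setsize{\vars{\falt}} = (k+1)\,N\,D^{k-1} = (k+1)(k-1)^{k-1} N (m')^{k-1}$, so, writing $m := \setsize{\clsd}$ and choosing $N$ and $m'$ by an elementary AM--GM split of the budget (taking $N(k+1)$ of order $m/k$ and $2(k-1)m'$ of order $(1-\tfrac{1}{k})m$), one gets $\setsize{\vars{\falt}} \ge (k+1)\bigl(\tfrac{m}{O(k)}\bigl(1-\tfrac{1}{k}\bigr)\bigr)^{k}$, which comfortably yields the numerical bound stated in the theorem. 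The step I expect to be the genuine obstacle --- and the reason the exponent is only $k$, one below the $\Bigoh{\setsize{\clsd}^{k+1}}$ of \refth{th:BN09-space-main-technical} --- is structural: a single $k$-term cannot ``see'' a $(k+1)$-XOR, so each block must hang off one of the cheap $y$-positions, and a type-(ii) term then has room for only $k-1$ $\varx$-coordinates together with a \emph{single} block variable. This leaves just $k-1$ product dimensions (plus the $\nu$-index) for indexing the blocks, capping $\setsize{\vars{\falt}}$ at $\Theta(\setsize{\clsd}^{k})$; everything else --- the XOR polarities, the parameter optimization --- is routine.
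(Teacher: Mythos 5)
Your construction is essentially the paper's own proof: the same weight\nobreakdash-$1$ selector gadgets $\funcw^j$ indexing $\Theta(m^{k-1})$ cells, with $k+1$ width-$k$ formulas per $\nu$ forcing the entire $(k+1)$-variable block at the unique active cell, the same preciseness argument via an unconstrained idle block, and the same parameter counting. The only differences are cosmetic --- the paper forces one block variable true and the other $k$ false so that the unnegated XOR always evaluates to $1$, whereas you force all $k+1$ true and flip the XOR polarity according to the parity of $k+1$ (which means that for odd $k$ your falsifying assignment must give the idle blocks \emph{odd}, not even, parity).
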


\begin{proof}
We utilize all the previous notation and start with the CNF formula
\begin{equation}
  \Land_{\nu \in \intnfirst{m(k-1)}}
  \
  \Lor_{(i_1, \ldots, i_{k-1}) \in \intnfirst{m(k-1)}^{k-1}}
  y^{\nu}_ {i_1,  \ldots, i_{k-1}}
\end{equation}
and substitute an exclusive or over variables
$y^{\nu ,r}_ {i_1,  \ldots, i_{k-1}}$,
$r = 1, \ldots, k+1$,
for every variable
$y^{\nu}_ {i_1,  \ldots, i_{k-1}}$. This results in the formula
\begin{equation}
  \falt =
  \Land_{\nu \in \intnfirst{m(k-1)}}   %    {\nu}
  \
  \Lor_{(i_1, \ldots, i_{k-1}) \in \intnfirst{m(k-1)}^{k-1}}
  \Xor_{r=1}^{k+1}
  y^{j,r}_ {i_1,  \ldots, i_{k-1}}
\end{equation}
which will be our
\cdkxorform[k+1].
Clearly, $\falt$ contains
$  %  \geq
(k+1) \cdot (m(k-1))^k$ variables.
We claim that the following easy modification of the $k$-DNF set from
Section \ref{sec:result} ``precisely'' implies $\falt$
in the sense of
\refth{th:limits-BN-technique}:

\begin{subequations}
\begin{align}
  \label{eq:limits-BN-weight-constraints}
  &
  \funcw^j_m(\vecx^j)
  &&
  1 \leq j < k
  \\
  \label{eq:limits-BN-positive}
  &
  \Lor_{(i_1, \ldots, i_{k-1}) \in \intnfirst{m}^{k-1}}
  \biggl(
  \varx^1_{i_1} \land
  \formuladots \land \varx^{k-1}_{i_{k-1}} \land
  \vary^{\nu, 1}_{i_1, \ldots, i_{k-1}}
  \biggr)
  &&
  1 \leq \nu \leq m(k-1)
  \\
  \label{eq:limits-BN-negative}
  &
  \Lor_{(i_1,  \ldots, i_{k-1}) \in \intnfirst{m}^{k-1}}
  \biggl(
  \varx^1_{i_1} \land
  \formuladots \land \varx^{k-1}_{i_{k-1}} \land
  \olnot{\vary}^{\nu,r}_{i_1, \ldots, i_{k-1}}
  \biggr)
  &&
  1 \leq \nu \leq m(k-1), \ 2 \leq r \leq k+1
\end{align}
\end{subequations}
It is straightforward to verify that
$\clsd$
consists of
less than
$m(k-1)(k+1) + 2mk \leq mk(k+2)$
\mbox{$k$-DNF} formulas.
$\clsd$ implies $\falt$ since
once we have picked which variables
$\varx^{1}_ {i^*_1},
\varx^{2}_ {i^*_2}, \ldots,
\varx^{k-1}_ {i^*_{k-1}}$
should be satisfied,
$\clsd$~will force all XOR blocks
$\Xor_{r=1}^{k+1}\vary^{\nu, r}_{i^*_1, \ldots, i^*_{k-1}}$,
$j \in \intnfirst{m(k-1)}$
to true by requiring
the variable
$\vary^{\nu, 1}_{i^*_1, \ldots, i^*_{k-1}}$
to be true and all other
variables
$\vary^{\nu, r}_{i^*_1, \ldots, i^*_{k-1}}$, $r \geq 2$,
to be false.
Finally, it is also easy to verify that
$\clsd$ implies $\falt$
``precisely'' in the sense that if a single XOR block
$\Xor_{r=1}^{k+1}\vary^{\nu, r}_{i^*_1, \ldots, i^*_{k-1}}$
is removed from $\falt$,
then we can satisfy
$\clsd$
but falsify the rest of the formula~$\falt$ (the proof is very similar
to the one given in Section \ref{sec:result}).
\Refth{th:limits-BN-technique} follows.
\end{proof}

\section{Concluding Remarks and Open Problems}
\label{sec:concluding-remarks}

We conclude this paper by discussing two remaining open problems.

Firstly, the most obvious problem still open is to close the gap
between
$\bigomega{m}^k$
and
$\Bigoh{(mk)^{k+1}}$
for the number of variables that can appear in a
minimally unsatisfiable
$k$-DNF set with $m$ formulas.
There is a strongly expressed intuition in
\cite{BN09SpaceHierachy}
that it should be possible to bring down the exponent from
$k+1$ to~$k$. Hence we have the following conjecture,
where for simplicity we fix $k$ to remove it from the asymptotic notation.

\begin{conjecture}
  Suppose that $\clsd$ is a minimally unsatisfiable \kdnfset
  for some arbitrary but fixed
  positive integer~$k$.  %  constant $k > 1$.
  Then the
  number of variables in $\clsd$ is at most
  $
%      \setsize{\vars{\clsd}}\leq
  \bigoh{ \setsize{\clsd}}^{k}
  $.
\end{conjecture}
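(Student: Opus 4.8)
The natural plan is to sharpen the Hall-type counting behind \refth{th:min-unsat-k-DNF-lower-bound} -- the $k$-DNF analogue of Tarsi's lemma -- so that it loses only a multiplicative $\bigoh{m}$ at each of $k$ stages rather than roughly $km$ at each of $k+1$ stages. The raw material is the data that minimality hands us: for every term $\trm$ occurring in a formula $D \in \clsd$ there is a literal $\ell_{\trm} \in \trm$ and an assignment $\alpha_{\trm}$ satisfying $\clsd$ with $\trm$ replaced by $\trm \setminus \{\ell_{\trm}\}$, and since $\clsd$ is unsatisfiable $\alpha_{\trm}$ must falsify $D$ outright, so it sets every literal of $\trm$ except $\ell_{\trm}$ to true, falsifies $\ell_{\trm}$ and all remaining terms of $D$, and satisfies $\clsd \setminus \{D\}$. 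I would first catalogue all of this data and try to extract from it a $(k-1)$-dimensional incidence structure on the variables, guided by the product shape of the construction in \refsec{sec:new-lower-bound}, where the variables are essentially indexed by a coordinate vector $(i_1,\dots,i_{k-1})$ together with one further index and the auxiliary formulas only pin those coordinates down.

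Two concrete routes suggest themselves. The first is a direct system-of-distinct-representatives argument: show that all but $\bigoh{m}$ of the variables admit an injective labelling by a set of size $\bigoh{m}^{k-1}$, the labels built from the distinguished literals $\ell_{\trm}$ and the formulas containing them, with injectivity forced by minimality via a Hall-type deficiency argument. The second is induction on $k$: choose a pivot variable $x$ with care, restrict it, discard the formulas that become trivially true, pass to a minimally unsatisfiable subset of what remains (by further shrinking terms and deleting formulas), and argue that the restriction has trimmed the terms that govern the large variable families from arity $k$ down to arity $k-1$, so that the inductive bound $\bigoh{m}^{k-1}$ accounts for those variables while only $\bigoh{m}$ variables are spent as pivots; the base case $k=1$ is Tarsi's lemma itself, with its clean linear bound. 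In either route the crucial lemma to isolate is a ``one-dimension'' statement: a \emph{single} use of minimality can only be responsible for $\bigoh{m}$ genuinely new variables once the previously accounted-for ones are quotiented out.

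The step I expect to be the real obstacle -- and exactly where the current $(km)^{k+1}$ argument is forced to be wasteful -- is controlling how the up-to-$k$ variables inside a single term interact \emph{across} the $m$ formulas. A generic one-variable restriction trims only the terms that literally contain that variable, and minimality only ever speaks of shrinking one term at a time, so there is no a priori reason that a bounded number of restrictions simultaneously reduces the effective arity everywhere that matters; matching the pivot choice and the bookkeeping to the product structure of \refsec{sec:new-lower-bound} tightly enough to avoid an extra $\Theta(km)$ factor is the crux. As a first concrete target I would settle $k=2$, where the gap is $\bigoh{m^2}$ (\refsec{sec:new-lower-bound}) versus $\bigoh{m^3}$ (\refth{th:min-unsat-k-DNF-lower-bound}): a clean proof that every minimally unsatisfiable $2$-DNF set with $m$ formulas has $\bigoh{m^2}$ variables should expose the right one-dimension lemma and the right notion of pivot, after which the general case ought to follow by iterating the template -- and, through the connection noted in \refsec{sec:implications-for-trade-offs}, would plausibly also push the exponent in \refth{th:BN09-space-main-technical} from $k+1$ down to $k$.
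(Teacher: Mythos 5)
The statement you are trying to prove is not a theorem of this paper: it appears in \refsec{sec:concluding-remarks} explicitly as a \emph{conjecture}, left open precisely because nobody --- including the authors --- knows how to close the gap between the $\Omega(m)^k$ lower bound of \refth{th:new-explicit-construction-min-unsat-kDNF} and the $\Bigoh{(mk)^{k+1}}$ upper bound of \refth{th:min-unsat-k-DNF-lower-bound}. So there is no proof in the paper to compare against, and your proposal, read as a proof, has the most basic possible gap: it never actually establishes anything. Both of your routes are stated in the conditional (``show that all but $\bigoh{m}$ of the variables admit an injective labelling\ldots'', ``argue that the restriction has trimmed the terms\ldots''), and the ``one-dimension lemma'' you correctly identify as the crux is exactly the statement that nobody can currently prove. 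Naming the obstacle is not the same as overcoming it.

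To be concrete about why each route stalls where you say it does. For the induction on $k$: after restricting a pivot variable $x$, the terms that do not contain $x$ remain $k$-terms, so the residual set is still a $k$-DNF set, not a $(k-1)$-DNF set; worse, minimality is a global property of the original set and is not inherited by the restriction, so you cannot simply ``pass to a minimally unsatisfiable subset'' and invoke the inductive hypothesis --- re-minimalizing can change the number of formulas and destroy the accounting of which variables were ``spent''. For the SDR route: the assignments $\alpha_{\trm}$ witnessing minimality do give you one satisfying assignment per term, hence $\bigoh{mk\cdot(\text{number of terms per formula})}$ pieces of data, but the number of terms per formula is itself unbounded (in the construction of \refsec{sec:result} it is $\Tightsmall{m^{k-1}}$), so there is no a priori bound on the size of the label set, and the Hall-type deficiency argument you gesture at is precisely what would need to be invented. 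Your instinct to first settle $k=2$ (is every minimally unsatisfiable $2$-DNF set with $m$ formulas over $\bigoh{m^2}$ variables?) is a reasonable research direction, but it remains open.
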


Proving this conjecture would establish asymptotically tight bounds
for minimally unsatisfiable $k$-DNF sets (ignoring factors involving
the constant~$k$).

Secondly, we again stress that the result in
\refth{th:limits-BN-technique}
does not per se imply any restrictions (that we are aware of) on what
space separations or time-space trade-offs
are possible
for $k$-DNF resolution.
The reason for this is that our improved lower bound only rules out
\emph{a particular approach} for proving better
separations and trade-offs, but it does not say anything to the effect that the
$k$-DNF resolution proof systems are strong enough to match this lower
bound.
It would be very interesting to understand better the strength
of\mbox{ $k$-DNF} resolution in this respect.
Hence we have the following  open problem
(where we refer to
\cite{BN08UnderstandingSpace}
or
\cite{Nordstrom09PebblingSurvey}
for the relevant formal definitions).

\begin{openproblem}
  Let $\xorpebcontr{k+1}$ be the
  \xorpebcontrtext
%      as defined in
%      \cite{BN08UnderstandingSpace}
  over some
  directed acyclic
  graph~$G$.
  Is it possible that
  $k$-DNF resolution can refute
  $\xorpebcontr{k+1}$
  in space asymptotically better than the black-white pebbling price
  $\bwpebblingprice{G}$
  of~$G$?
%      Or is there a different and more powerful approach than
%      \cite{BN09SpaceHierachy}
%      for proving a linear lower bound on space in terms of black-white pebbling?
\end{openproblem}

We remark that for standard resolution,
\ie $1$-DNF resolution,
the answer to this question  is that
\xorpebcontrtext{}s
over two or more variables
\emph{cannot} be refuted in space less than the black-white pebbling price,
as proven in
\cite{BN08UnderstandingSpace}.
For
$k$-DNF resolution with $k>1$, however,
the best known lower bound is
$\Bigomega{\sqrt[k+1]{\bwpebblingprice{G}}}$,
as shown in
\cite{BN09SpaceHierachy}.
There is a wide gap here between the upper and lower bounds since,
as far as we are aware, there are no known
$k$-DNF resolution proofs that can do
better than space linear in the
(black)  %    black-white
pebbling price (which is
achievable by standard
%    , \ie $1$-DNF,
resolution).

\section*{Acknowledgements}

The authors would like to thank Eli Ben-Sasson for getting them to
work together on this problem and for many stimulating discussions.
Also,
the first author   %   Jakob Nordström
is grateful to Johan Håstad,  Nati Linial,
and \mbox{Klas Markström}
for providing thoughtful comments and advice about the problem of
minimally unsatisfiable $k$-DNF sets.

%
% BIBLIOGRAPHY
%

\bibliography{refArticles,refBooks,refOther}

\begin{thebibliography}{ABSRW02}

\bibitem[AB04]{AB04automatizabilityOfResolutionAndRelated}
Albert Atserias and Maria~Luisa Bonet.
\newblock On the automatizability of resolution and related propositional proof
  systems.
\newblock {\em Information and Computation}, 189(2):182\nobreakdash--201, March
  2004.

\bibitem[ABE02]{ABE02LowerBoundsWPHPandRandom}
Albert Atserias, Maria~Luisa Bonet, and Juan~Luis Esteban.
\newblock Lower bounds for the weak pigeonhole principle and random formulas
  beyond resolution.
\newblock {\em Information and Computation}, 176(2):136\nobreakdash--152,
  August 2002.

\bibitem[ABSRW02]{ABRW02SpaceComplexity}
Michael Alekhnovich, Eli Ben-Sasson, Alexander~A. Razborov, and Avi Wigderson.
\newblock Space complexity in propositional calculus.
\newblock {\em SIAM Journal on Computing}, 31(4):1184\nobreakdash--1211, 2002.
\newblock Preliminary version appeared in \emph{STOC~'00}.

\bibitem[AL86]{AL86Minimal}
Ron Aharoni and Nathan Linial.
\newblock Minimal non-two-colorable hypergraphs and minimal unsatisfiable
  formulas.
\newblock {\em Journal of Combinatorial Theory}, 43:196\nobreakdash--204, 1986.

\bibitem[Ale05]{Alekhnovich05LowerBoundsk-DNF3-CNF}
Michael Alekhnovich.
\newblock Lower bounds for \mbox{$k$-{DNF}} resolution on random
  $3$\nobreakdash-{CNF}s.
\newblock In {\em Proceedings of the 37th Annual ACM Symposium on Theory of
  Computing ({STOC}~'05)}, pages 251\nobreakdash--256, May 2005.

\bibitem[BET01]{BET01MinimallyUnsatisfiable}
Sven Baumer, Juan~Luis Esteban, and Jacobo Tor\'an.
\newblock Minimally unsatisfiable {CNF} formulas.
\newblock {\em Bulletin of the European Association for Theoretical Computer
  Science}, 74:190\nobreakdash--192, June 2001.

\bibitem[BSG03]{BG03SpaceComplexity}
Eli Ben-Sasson and Nicola Galesi.
\newblock Space complexity of random formulae in resolution.
\newblock {\em Random Structures and Algorithms}, 23(1):92\nobreakdash--109,
  August 2003.
\newblock Preliminary version appeared in \emph{CCC~'01}.

\bibitem[BSN08]{BN08ShortProofs}
Eli Ben-Sasson and Jakob Nordström.
\newblock Short proofs may be spacious: An optimal separation of space and
  length in resolution.
\newblock In {\em Proceedings of the 49th Annual {IEEE} Symposium on
  Foundations of Computer Science ({FOCS}~'08)}, pages 709\nobreakdash--718,
  October 2008.

\bibitem[BSN09a]{BN09SpaceHierachy}
Eli Ben-Sasson and Jakob Nordström.
\newblock A space hierarchy for \mbox{$k$-{DNF}} resolution.
\newblock Technical Report TR09-047, Electronic Colloquium on Computational
  Complexity (ECCC), April 2009.

\bibitem[BSN09b]{BN08UnderstandingSpace}
Eli Ben-Sasson and Jakob Nordström.
\newblock Understanding space in resolution: Optimal lower bounds and
  exponential trade-offs.
\newblock Technical Report TR09-034, Electronic Colloquium on Computational
  Complexity (ECCC), March 2009.

\bibitem[BSW01]{BW01ShortProofs}
Eli Ben-Sasson and Avi Wigderson.
\newblock Short proofs are narrow---resolution made simple.
\newblock {\em Journal of the ACM}, 48(2):149\nobreakdash--169, March 2001.
\newblock Preliminary version appeared in \emph{STOC~'99}.

\bibitem[CS88]{CS88ManyHard}
Va{\v{s}}ek Chv{\'a}tal and Endre Szemer{\'e}di.
\newblock Many hard examples for resolution.
\newblock {\em Journal of the ACM}, 35(4):759\nobreakdash--768, October 1988.

\bibitem[EGM04]{EGM04Complexity}
Juan~Luis Esteban, Nicola Galesi, and Jochen Messner.
\newblock On the complexity of resolution with bounded conjunctions.
\newblock {\em Theoretical Computer Science}, 321(2-3):347\nobreakdash--370,
  August 2004.
\newblock Preliminary version appeared in \emph{ICALP~'02}.

\bibitem[JN02]{JN02OptimalLowerBound}
Jan Johannsen and N.~S. Narayanaswamy.
\newblock An optimal lower bound for resolution with 2-conjunctions.
\newblock In {\em Proceedings of the 27th International Symposium on
  Mathematical Foundations of Computer Science ({MFCS}~'02)}, volume 2420 of
  {\em Lecture Notes in Computer Science}, pages 387\nobreakdash--398.
  Springer, August 2002.

\bibitem[Kra01]{K01OnTheWeak}
Jan Kraj{\'\i}{\v{c}}ek.
\newblock On the weak pigeonhole principle.
\newblock {\em Fundamenta Mathematicae},
  170(1\nobreakdash-3):123\nobreakdash--140, 2001.

\bibitem[Kul00]{Kullmann00Matroid}
Oliver Kullmann.
\newblock An application of matroid theory to the {SAT} problem.
\newblock In {\em Proceedings of the 15th Annual {IEEE} Conference on
  Computational Complexity ({CCC}~'00)}, pages 116\nobreakdash--124, July 2000.

\bibitem[NH08]{NH08TowardsOptimalSeparationSTOC}
Jakob Nordström and Johan Håstad.
\newblock Towards an optimal separation of space and length in resolution
  ({E}xtended abstract).
\newblock In {\em Proceedings of the 40th Annual ACM Symposium on Theory of
  Computing ({STOC}~'08)}, pages 701\nobreakdash--710, May 2008.

\bibitem[Nor09a]{Nordstrom09NarrowProofsSICOMP}
Jakob Nordström.
\newblock Narrow proofs may be spacious: Separating space and width in
  resolution.
\newblock {\em SIAM Journal on Computing}, 39(1):59\nobreakdash--121, May 2009.
\newblock Preliminary version appeared in \emph{STOC~'06}.

\bibitem[Nor09b]{Nordstrom09PebblingSurvey}
Jakob Nordström.
\newblock New wine into old wineskins: A survey of some pebbling classics with
  supplemental results.
\newblock Manuscript in preparation. Current draft version available at the
  webpage \verb+http://people.csail.mit.edu/jakobn/research/+, 2009.

\bibitem[Raz03]{Razborov03PseudorandomGeneratorsHard}
Alexander~A. Razborov.
\newblock Pseudorandom generators hard for \mbox{$k$-{DNF}} resolution and
  polynomial calculus resolution.
\newblock Manuscript. Available at the webpage
  \verb+http://www.mi.ras.ru/~razborov/+, July 2003.

\bibitem[SBI04]{SBI04SwitchingLemma}
Nathan Segerlind, Samuel~R. Buss, and Russell Impagliazzo.
\newblock A switching lemma for small restrictions and lower bounds for
  \mbox{$k$-{DNF}} resolution.
\newblock {\em SIAM Journal on Computing}, 33(5):1171\nobreakdash--1200, 2004.

\bibitem[Seg05]{Segerlind05Exponential}
Nathan Segerlind.
\newblock Exponential separation between {R}es($k$) and {R}es($k + 1$) for $k
  \leq \epsilon \log n$.
\newblock {\em Information Processing Letters}, 93(4):185\nobreakdash--190,
  February 2005.

\end{thebibliography}

% \nocite{*}            % include all references in .bib-files

% \bibliographystyle{plain}   % standard BibTeX (numbers as lables)
%\bibliographystyle{nada-en}   % NADA version of standard BibTeX
 \bibliographystyle{alpha}   % lables like Knu66

\end{document}